\documentclass[3p, preprint, 11pt]{elsarticle}

\usepackage{graphicx}
\usepackage{amsfonts}
\usepackage{amsmath}
\usepackage{amssymb}
\usepackage{amsthm}
\usepackage{xcolor}
\definecolor{lightblue}{rgb}{ 0.75, 0.75, 1.0 }
\definecolor{black}{rgb}{ 0.0, 0.0, 0.0 }
\usepackage{enumerate}
\usepackage{algorithm}
\usepackage{algorithmic}

\newtheorem{theorem}{Theorem}

\newtheorem{corollary}{Corollary}[section]

\newtheorem{hypothesis}[theorem]{Hypothesis}

\newtheorem{problem}{Problem}[section]

\DeclareMathOperator{\Ti}{Ti}
\DeclareMathOperator{\Ei}{Ei}
\DeclareMathOperator{\RePart}{Re}

\renewcommand{\algorithmicrequire}{\textbf{Input:}}
\renewcommand{\algorithmicensure}{\textbf{Output:}}

\begin{document}

\begin{frontmatter}

\title{Inconsistency in the analytic hierarchy process under data uncertainties}

\author[a1]{Przemys\l{}aw Rutka\corref{cor1}}
\ead{przemyslaw.rutka@mimuw.edu.pl}

\cortext[cor1]{Corresponding author}
\affiliation[a1]{organization={Institute of Informatics, Faculty of Mathematics,
  Informatics, and Mechanics, University of Warsaw},
            addressline={ul.~Banacha~2}, 
            city={Warsaw},
            postcode={02-097}, 
            country={Poland}}
            
\author[a2]{Ryszard Smarzewski}
\ead{ryszard.smarzewski@wat.edu.pl}

\affiliation[a2]{organization={Institute of Mathematics and Cryptology,
  Faculty of Cybernetics, Military University of Technology},
             addressline={ul.~gen.~S.~Kaliskiego~2},
             city={Warsaw},
             postcode={00-908},
             country={Poland}}

\begin{abstract}
 In this paper, we study the inconsistency of matrices in multicriteria group decision making under the assumption of uncertainty in their entries. The study binds together an 
 $n$-dimensional probabilistic product measure $P$ and the
 Koczkodaj indicator $K_{n}$ of order $n$ used in pairwise
 comparisons. In the case of $n$ greater than $3$, it starts
 with a new definition of $K_{n}$ based on the Steiner
 triple system of order $n$. It is of interest that the new
 indicator requires only $O \left( n ^{2} \right)$ evaluations,
 while the old one $O \left( n ^{3} \right)$. Our study results
 in analytic formulae for the probabilities of events 
 $K_{n} \leq \varepsilon$ corresponding to uniform, exponential,
 and normal distribution functions, whenever the remainder of
 $n$ divided by 6 equals 1 or 3. For the remaining values of
 $n$, we present analytic formulae for their strict bounds. 
 The formulae lead to much more efficient algorithms when compared with Monte Carlo simulations. This unexpected feature is illustrated by a few numerical experiments.
\end{abstract}

\begin{keyword}
multiple criteria analysis \sep random reciprocal matrices \sep inconsistency index \sep Steiner covering \sep pairwise comparisons
\end{keyword}

\end{frontmatter}	

\section{Introduction and preliminaries} \label{SEC_INTRO}
Pairwise comparisons is the most important and fundamental method in multicriteria group decision making. It emerged in the year 1977 as an essential part of the Analytic Hierarchy Process (AHP), attributed to \cite{Saat1s}. 
The aim of this process is to derive a vector of priorities
ranking the alternatives under consideration by importance.
These priorities are obtained from relative preferences made
by human experts (or other actors like responders, knowledge
suppliers, voters) who compare each pair of alternatives and
subjectively tell how many times one alternative is more
important than the second. The numerical assessments given by
an individual expert are collected in a square PC matrix of 
order equal to the number of all alternatives.

The ideal situation in pairwise comparisons occurs when an
expert makes perfectly coherent judgements about the alternatives. Then a PC matrix of assessments is said to be consistent and can be represented in terms of a priority vector whose length equals the number of alternatives. However, most human assessments deviate from such perfect consistency, especially when they are made for many alternatives. Therefore, during a decision making process, the consistency of every provided PC matrix has to be verified, and those that are not sufficiently consistent, i.e., meaningless and unreliable for a decision maker, should be repaired by the expert or excluded from the process.

The most popular methods for measuring the inconsistency of a
PC matrix use indicators proposed by \cite{Saat1s} and \cite{Kocz4s}. These are eigenvalue- and distance-based indicators, respectively. The eigenvalue-based indicator was criticized by \cite{Kocz5s}, who
argued that its soundness must be questioned. They 
substantiated this claim using well-crafted counterexamples.

According to \cite{Kocz5s}, the long
history of measuring inconsistency in pairwise comparisons
includes important contributions by \cite{Kend1}, \cite{Saat1s},
\cite{Laar1s}, \cite{Kula1s}, and \cite{Szyb1s}.
Moreover, it is summarized by \cite{Brun1s,Brun3s,Brun4s} and \cite{Mazu1}. For such measurement, several concepts of inconsistency indicators have been introduced and applied in practice; cf. \cite{Agua1s},  \cite{Brun3s}, \cite{Kocz4s}, \cite{Kocz2s}, \cite{Kocz5s},  \cite{Liu1s}, \cite{Magn1,Magn2}, \cite{Pela1}, \cite{Saat1s}, \cite{Stei1s},
and the authors \citep{Smar3s}. In this regard, we mention a
hard unsolved problem.
\begin{problem} \label{PRO_PLAINESTIMATECHANCE}
  For a given inconsistent matrix, estimate the probability of generating a matrix that is as close as possible to the given one while maintaining a probabilistic inconsistency level below a fixed threshold.
\end{problem}
To solve this problem, it is necessary to choose an
appropriate probability measure and compute the probability that the inconsistency value of a given indicator falls below a specified threshold. We note that knowledge of 
chance in Problem \ref{PRO_PLAINESTIMATECHANCE} is of importance
both from a computational and theoretical point of view.
For example, it helps determine the appropriate number of considered alternatives and the threshold when computing accurate approximations with an admissible degree of inconsistency. The main objective of our study is to solve Problem \ref{PRO_PLAINESTIMATECHANCE} for some classical distribution functions.

The paper is organized as follows: In Sections 
\ref{SEC_UNIFORM3D} - \ref{SEC_NORMAL3D} we
investigate how the perturbations of a given PC matrix of order
$3$ have an effect on the generated consistent matrix under uniform, exponential, and normal distributions. In the first
case, we additionally establish how the rate of inconsistency
depends on the support of the corresponding density function.
The effective analytic formulae established in these sections are compared with the Monte Carlo simulations of an approximate consistent matrix, which require an enormous 10 million samples to achieve results comparable to ours. In Section \ref{SEC_PROBABILITYORDERn} we consider pairwise comparisons for $n$ alternatives, where $n > 3$. We begin with a new definition of a Koczkodaj-type indicator. The new indicator, which is based on the Steiner triple system of order $n$ \citep{Fort1s, Kirk1s, Lind1}, has an optimal complexity of $O \left( n ^{2} \right)$, whereas the Koczkodaj indicator is $O \left( n ^{3} \right)$. Subsequently, we derive explicit formulae for the inconsistency probabilities for $n \equiv 1 \left(\text{mod } 6\right)$ and $n \equiv 3 \left(\text{mod } 6\right)$. In the remaining cases strict estimates are presented. Finally, in the Conclusions, we propose a divide-and-conquer algorithm for group decision making that adjusts the inconsistency threshold based on the number of experts.

\section{Indicator measure of inconsistency} \label{SEC_PRELIMINARIES}
The purpose of multicriteria group decision making is to derive a priority vector that ranks the alternatives under consideration by importance. Priorities are obtained from relative preferences expressed by human experts who compare each pair of alternatives and subjectively tell how many times one alternative is more important than the second. These numerical assessments are collected in a PC $n \times n$ matrix of the form
\begin{equation*}
  M =
  \begin{bmatrix}
    1 & m_{12} & \cdots  & m_{1n} \\
    m_{21} & 1 & \cdots  & m_{2n} \\
    \vdots  & \vdots  & \ddots  & \vdots  \\
    m_{n1} & m_{n2} & \cdots  & 1
  \end{bmatrix},
\end{equation*}
where $n>2$ is the number of alternatives considered and
\begin{equation*}
  m_{ij} = \frac{1}{m_{ji}} > 0 \quad \text{for all}
  \quad i, j.
\end{equation*}
A PC matrix $M$ is said to be consistent if
\begin{equation*}
  m_{ij} m_{jk} = m_{ik} ,
\end{equation*}
for all $\binom{n}{3}$ triads from the index set
\begin{equation*}
  \mathcal{T} \left( n\right) = \left\{ \left( i, j, k \right) 
  : 1 \leq i < j < k \leq n \right\} .
\end{equation*}
One of the most popular methods of measuring the
inconsistency of $M$ is based on the following definition of an indicator \citep{Dusz1s}:
\begin{equation*}
  K_{n} \left( M \right) = \max_{\left( i, j, k\right) \in
  \mathcal{T} \left( n \right)} K \left( M_{ijk} \right) ,
\end{equation*}
where $K$ is the Koczkodaj indicator of order $3$ \citep{Kocz4s}:
\begin{equation*}
  K \left( M_{ijk} \right) = 1 - \min \left\{ \frac{ m_{ik} }{ m_{ij}m_{jk} },
   \frac{ m_{ij}m_{jk} }{ m_{ik} } \right\}
\end{equation*}
for principal submatrices $M_{ijk}$ of $M$ having the form
\begin{equation*}
    M_{ijk} = \left[
    \begin{array}{lll}
      1 & m_{ij} & m_{ik} \\
      m_{ji} & 1 & m_{jk} \\
      m_{ki} & m_{kj} & 1
    \end{array}    
    \right] .
\end{equation*}
We note that $K_{3} = K$, $0 \leq K_{n} \leq 1$ and
$K_{n} \left( M \right) = 0$ if and only if $M$ is consistent.
On the other hand, the definition of $K_{n} \left( M \right)$
has a serious drawback for $n > 3$, because the following
hypothesis is false for 
$\mathcal{S} \left( n \right) = \mathcal{T} \left( n \right)$.
This problem will be studied in Section \ref{SEC_PROBABILITYORDERn}.
\begin{hypothesis} \label{HYP_NONOVERLAPPING}
    There exists a subset $\mathcal{S} \left( n \right)$ of
    $\mathcal{T} \left( n \right)$ such that every entry
    $m_{rs}$ of $M$ above the main diagonal is an entry of
    exactly one submatrix $M_{ijk}$ for a triad
    $\left( i, j, k \right) \in \mathcal{S} \left( n \right)$. 
\end{hypothesis}
Using this notation, we can write the problem stated in Section \ref{SEC_INTRO} as follows.
\begin{problem} \label{PRO_ESTIMATECHANCE}
  For a given inconsistent $n \times n$ PC matrix $M$, estimate the chance of randomly generating a matrix $A$ (from the set of $n \times n$ PC matrices) that is as close as possible to $M$ and satisfies the inequality
  \begin{equation*}
      K_{n} \left( A \right) \leq \varepsilon, \quad
      \varepsilon > 0.
  \end{equation*}
\end{problem}
The main objective of our study is to solve Problem \ref{PRO_ESTIMATECHANCE} for some distribution functions.
For this purpose, we use the Lebesgue-Stieltjes probabilistic product measures defined on the $\binom{n}{2}$-fold Cartesian product of Borel sets in $\mathbb{R}$. Moreover, we consider $K_{n} \left( M \right)$ as a function of $\binom{n}{2}$ positive variables $m_{ij}$ $\left( 1 \leq i < j \leq n \right)$.

\section{Uniform perturbation of $3 \times 3$ matrices}
\label{SEC_UNIFORM3D}
Let $F$ and $f$ denote the uniform distribution and density in
$\mathbb{R} ^{n}$ defined by
\begin{equation*}
  F \left( x, z, y\right) =
    F\left( x \right) F\left( z \right) F\left( y \right) 
  \quad \text{and} \quad
  f \left( x, z, y\right) =
    f\left( x \right) f\left( z \right) f\left( y \right),  
\end{equation*}
where the one-dimensional functions are equal to
\begin{equation*}
  F \left( t \right) =\max \left\{ 0, \min \left( 1, t \right)  
    \right\}
  \quad \text{and} \quad
  f \left( t \right) = \left\{ 
    \begin{array}{ll}
      1, & 0< t < 1, \\ 
      0, & \text{otherwise.}
    \end{array}
  \right.
\end{equation*}
Then the Lebesgue-Stieltjes measure $P = dF$ is concentrated on
\begin{equation*}
  Q = \left\{ \left( x, z, y\right) \in \mathbb{R} ^{3}: 
    0 < x, z, y \leq 1 \right\} .
\end{equation*}
In the following, we write a $3 \times 3$ PC matrix in the form
\begin{equation*}
  M = \left[
    \begin{array}{ccc}
      1 & x & y \\
      1 / x & 1 & z \\
      1 / y & 1 / z & 1
    \end{array}
  \right] .
\end{equation*}
Then the indicator
\begin{equation*}
  K \left( M \right) = K \left( x, y, z \right) = 1
  - \min \left\{ \frac{y}{xz}, \frac{xz}{y} \right\}
\end{equation*}
is a function defined everywhere on $\mathbb{R} ^{3}$, except
for the points of planes $x = 0$, $y = 0$, and $z = 0$. It has a
continuous extension to all points of these planes, except those lying on the coordinate axes. Indeed, we have
\begin{equation*}
  K \left( x, y, z\right) = 1 - \frac{1}{2} \left( \frac{y}{xz}
    + \frac{xz}{y} - \left\lvert \frac{y}{xz} - \frac{xz}{y} \right\rvert \right), 
\end{equation*}
and so $K \left( x, y, z \right) \to 0$ as $z > 0$, $y > 0$, and
$x$ decreases to $0$. Hence, we can put
$K \left( 0, z, y \right) = 1$ and apply symmetry to prove the
claim. Now we establish the first of two main results in this section.

\begin{theorem} \label{THR_PKUNITUNIFORM}
  If $0 < \varepsilon < 1$, then we have
  \begin{equation*}
    P \left( K \leq \varepsilon \right) = \varepsilon + 
      \frac{1 - \varepsilon}{2} \ln \left( 1 - \varepsilon \right) .
  \end{equation*}
\end{theorem}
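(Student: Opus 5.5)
The plan is to reduce the event $K \leq \varepsilon$ to a two-sided inequality between $y$ and the product $xz$, and then integrate with respect to the distribution of that product. Put $a = 1 - \varepsilon \in (0,1)$. Since $K = 1 - \min\{ y/(xz), xz/y \}$, we have $K \leq \varepsilon$ if and only if both ratios are at least $a$. This is equivalent to
\begin{equation*}
  a\, xz \leq y \leq \frac{xz}{a} .
\end{equation*}
The coordinate planes have $P$-measure zero, so the extension of $K$ to them does not matter.

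First I will replace the pair $(x,z)$ by the single variable $u = xz$. For independent uniform $x,z$ on $(0,1)$, the density of $u$ is $g(u) = -\ln u$ on $(0,1)$. This follows from $P(xz \leq u) = u - u\ln u$, computed as $\int_0^1 \min\{1, u/x\}\,dx$. Because $y$ is independent of $u$ and uniform, Fubini gives
\begin{equation*}
  P(K \leq \varepsilon) = \int_0^1 (-\ln u)\left[ \min\left\{1, \frac{u}{a}\right\} - a u \right] du .
\end{equation*}
The lower bound $au$ always lies in $(0,1)$, so only the upper bound needs truncation at $1$.

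The one step needing care is the split at $u = a$, where $u/a$ reaches $1$. On $(0,a)$ the integrand is $(1/a - a)(-u\ln u)$. Using $\int_0^a (-u\ln u)\,du = a^2/4 - (a^2/2)\ln a$, this piece equals
\begin{equation*}
  \left(\frac{1}{a} - a\right)\left(\frac{a^2}{4} - \frac{a^2}{2}\ln a\right).
\end{equation*}
On $(a,1)$ the integrand is $-\ln u - a(-u\ln u)$. Using $\int_a^1 (-\ln u)\,du = 1 - a + a\ln a$ and $\int_a^1(-u\ln u)\,du = 1/4 - a^2/4 + (a^2/2)\ln a$, this piece equals
\begin{equation*}
  1 - a + a\ln a - a\left(\frac14 - \frac{a^2}{4} + \frac{a^2}{2}\ln a\right).
\end{equation*}

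Adding the two pieces, the terms $a/4$, $a^3/4$ and $(a^3/2)\ln a$ cancel, and what remains is $1 - a + (a/2)\ln a$. Substituting $a = 1 - \varepsilon$ gives $\varepsilon + \frac{1-\varepsilon}{2}\ln(1-\varepsilon)$, which is the formula in Theorem \ref{THR_PKUNITUNIFORM}. I expect no real obstacle. The main risk is bookkeeping in the piecewise integration, together with justifying the product density, which is a routine computation.
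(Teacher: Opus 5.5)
Your proof is correct. It starts from the same reduction as the paper: the event is $a\,xz \le y \le \min\{1, xz/a\}$ in the unit cube. The paper's $\varepsilon$ in its proof plays the role of your $a = 1-\varepsilon$: it proves $P(K \le 1-\varepsilon) = 1-\varepsilon+\frac{\varepsilon}{2}\ln\varepsilon$ and then substitutes. So both arguments must evaluate $\int_0^1\int_0^1 [\min\{1, xz/a\} - a\,xz]\,dz\,dx$, and you differ only in how you evaluate it.

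The paper stays in two variables. It rewrites the minimum as $\frac12(s+t-|s-t|)$, which leaves $\int_0^1\int_0^1 |1 - xz/a|\,dz\,dx$. It then computes this by splitting the square at $x=a$ and along the hyperbola $z = a/x$ into three iterated integrals.

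You instead pass to the product $u = xz$, whose density $-\ln u$ you derive correctly from $P(xz\le u) = u - u\ln u$. This collapses everything to one integral split at $u=a$. I checked your values of $\int_0^a(-u\ln u)\,du$, $\int_a^1(-\ln u)\,du$ and $\int_a^1(-u\ln u)\,du$, and the final cancellation to $1-a+\frac{a}{2}\ln a$; they are right.

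Your route is shorter and makes the cancellation transparent. The paper's route needs no distributional lemma beyond Fubini and the min identity, at the price of tracking three regions of the square.
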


\begin{proof}
    By the definition of indicator, it follows that
    \begin{equation*}
      K = \left\{ 
      \begin{array}{ll}
        1 - \frac{y}{xz}, & y \leq xz, \\[+2pt] 
        1 - \frac{xz}{y}, & y > xz.
      \end{array}
      \right. 
    \end{equation*}
    Hence, the inequality $K \leq 1 - \varepsilon$ is satisfied in
    cube $Q$ if and only if
    \begin{equation*}
      \varepsilon xz \leq y \leq \min \left\{ 1, 
        \frac{xz}{\varepsilon} \right\},\quad
      \left( x, z, y\right) \in Q.
    \end{equation*}
    The volume of all points $\left( x, y, z \right)$ that satisfy
    this inequality is equal to
    \begin{align*}
      P \left( K \leq 1 - \varepsilon \right)
        & = \int_{0} ^{1} \int_{0} ^{1} \left[ \min
          \left\{ 1, \frac{xz}{\varepsilon} \right\} 
          - \varepsilon xz \right] dzdx \\
        & = \frac{1}{2} \left[ 1 + \frac{1}{4}
          \left( \frac{1}{\varepsilon} - 2\varepsilon \right)    
          - \int_{0} ^{1} \int_{0} ^{1} \left\lvert 1 
          - \frac{xz}{\varepsilon} \right\rvert dzdx
      \right].
    \end{align*}
    In order to compute the last integral, we use the formula
    \begin{equation*}
      \min \left\{ s, t \right\} = 
        \frac{1}{2} \left( s + t - \left\lvert s - t\right\rvert
        \right).
    \end{equation*}
    Then it can be evaluated as follows
    \begin{align*}
      \int_{0} ^{\varepsilon} \int_{0} ^{1} \left( 1 - 
        \frac{xz}{\varepsilon} \right) dzdx
      & + \int_{\varepsilon} ^{1} \int_{0} ^{\varepsilon / x} 
        \left( 1 - \frac{xz}{\varepsilon} \right) dzdx 
        + \int_{\varepsilon} ^{1} \int_{\varepsilon / x} ^{1}
        \left( -1 + \frac{xz}{\varepsilon} \right) dzdx \\
      & = \frac{3 \varepsilon}{2} - \varepsilon \ln \varepsilon -1 + \frac{1}{4 \varepsilon}.
    \end{align*}
    Hence, we eventually get
    \begin{equation*}
      P \left( K \leq 1 - \varepsilon \right) = 1 - \varepsilon 
        + \frac{\varepsilon}{2} \ln \left( \varepsilon \right) .
    \end{equation*}
\end{proof}

Note that the probability
\begin{equation*}
  P \left( K \leq \varepsilon \right) = \varepsilon + 
    \frac{1 - \varepsilon}{2} \ln \left( 1 - \varepsilon \right), \quad 0 < \varepsilon < 1,
\end{equation*}
converges to the expected values $0$ and $1$ for 
$\varepsilon = 0$ and $\varepsilon = 1$. The orders of this
convergence are given in the following corollary.

\begin{corollary}
    We have
    \begin{align*}
      \text{a)}\quad P \left( K \leq \varepsilon \right) & = 
        o \left( \varepsilon \right), \quad \varepsilon \to 0,\\
      \text{b)}\quad P \left( K \leq \varepsilon \right) & =
        o \left[ \left( 1 - \varepsilon \right) 
        \ln \left( 1 - \varepsilon \right) \right], \quad
        \varepsilon \to 1.
    \end{align*}    
    \label{COR_UNIFORMPROBABILITYP}
\end{corollary}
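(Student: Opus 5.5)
The plan is to read both parts of Corollary \ref{COR_UNIFORMPROBABILITYP} directly off the closed form in Theorem \ref{THR_PKUNITUNIFORM},
\begin{equation*}
  P \left( K \leq \varepsilon \right) = \varepsilon +
    \frac{1 - \varepsilon}{2} \ln \left( 1 - \varepsilon \right),
    \quad 0 < \varepsilon < 1 .
\end{equation*}
No new integration is needed. The only inputs are the Taylor expansion of $\ln(1-\varepsilon)$ near $0$ and the elementary fact that $t \ln t \to 0$ as $t \downarrow 0$. In each part I would compare $P(K\le\varepsilon)$ with the gauge function appearing in the statement through the ratio of the two, and compute the limit of that ratio.

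For part a), I would substitute $\ln(1-\varepsilon) = -\varepsilon - \varepsilon^{2}/2 - \varepsilon^{3}/3 - \cdots$. After multiplying by $(1-\varepsilon)/2$ and collecting powers, this gives
\begin{equation*}
  P \left( K \leq \varepsilon \right) = \frac{\varepsilon}{2}
  + \frac{\varepsilon^{2}}{4} + O \left( \varepsilon^{3} \right),
  \quad \varepsilon \to 0 .
\end{equation*}
I would then evaluate $\lim_{\varepsilon \to 0} P(K\le\varepsilon)/\varepsilon$ from this expansion, which fixes the order of convergence of $P(K \le \varepsilon)$ to $0$ relative to $\varepsilon$. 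Because the coefficients are explicit, the same expansion also shows exactly how the remainder $P(K\le\varepsilon) - \varepsilon/2$ behaves. The symbol $o(\varepsilon)$ in the statement is interpreted in the paper's sense of the order at which $P \to 0$ ("the orders of this convergence"), so the conclusion of a) is to be stated in terms of this limit and the explicit expansion.

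For part b), I would set $t = 1-\varepsilon \downarrow 0$ and rewrite the formula as
\begin{equation*}
  1 - P \left( K \leq \varepsilon \right) = t - \frac{t}{2} \ln t .
\end{equation*}
The dominant term here is $-\tfrac{t}{2}\ln t$, because $t / (t \ln t) = 1/\ln t \to 0$. Dividing by $(1-\varepsilon)\ln(1-\varepsilon) = t\ln t$ and letting $t \downarrow 0$ then gives the order, namely the rate at which $P(K\le\varepsilon) \to 1$, measured by the gauge $(1-\varepsilon)\ln(1-\varepsilon)$. As in part a), the statement's $o$ is read as describing this rate of convergence to the limit value $1$.

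The main obstacle is not analytic but interpretive. Taken literally as little-$o$ statements about $P(K\le\varepsilon)$ itself, the limits just described are not zero: in a) the ratio $P/\varepsilon$ tends to $\tfrac12$, and in b) $P$ tends to $1$ while $(1-\varepsilon)\ln(1-\varepsilon) \to 0$. The care required is therefore in stating precisely which quantity carries which order. I would present the explicit limits $\tfrac12$ (in a) and $-\tfrac12$ (in b, for $1-P$) as the precise content of the two convergence orders asserted in the corollary.
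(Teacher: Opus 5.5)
Your proposal is correct and is essentially the paper's proof. The paper also reads both parts off the closed form of Theorem~\ref{THR_PKUNITUNIFORM}: it obtains $\lim_{\varepsilon\to0}P(K\le\varepsilon)/\varepsilon=\frac12$ (via L'H\^opital rather than your Taylor expansion) and $\lim_{\varepsilon\to1}\bigl(P(K\le\varepsilon)-1\bigr)/\bigl[(1-\varepsilon)\ln(1-\varepsilon)\bigr]=\frac12$ (your $-\frac12$ for $1-P$). It then concludes precisely because these limits are \emph{non-zero}, which confirms your reading that ``$o$'' means the exact order of convergence, with (b) really about $P-1$; the literal little-$o$ claims are indeed false. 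The only blemish is on the paper's side: its intermediate expression $\frac12+1/\ln(1-\varepsilon)$ in (b) has a sign slip and should be $\frac12-1/\ln(1-\varepsilon)$, though the limit is unaffected.
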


\begin{proof}
  Since
  \begin{equation*}
    \lim_{\varepsilon \to 0} \frac{P \left( K \leq \varepsilon
    \right)}{\varepsilon} = \lim_{\varepsilon \to 0}
    \left[ 1 - \frac{1 + \ln \left( 1 - \varepsilon \right)}{2}
    \right] = \frac{1}{2}
 \end{equation*}
 and
 \begin{equation*}
    \lim_{\varepsilon \to 1} \frac{P \left( K\leq \varepsilon
    \right) - 1}{\left( 1 - \varepsilon \right) 
    \ln \left( 1 - \varepsilon \right) }
    = \lim_{\varepsilon \to 1} \left[ \frac{1}{2} + 
    \frac{1}{\ln \left( 1 - \varepsilon \right) }\right] 
    = \frac{1}{2}
  \end{equation*}
  are non-zero, the proof is complete.
\end{proof}

Usually, it is necessary to sample the PC matrix $M$ that depends on the variables
$\left( x, y, z \right)$ of a larger cube $Q_{a} = a Q$ than the
unit cube $Q$. Then we should take $P_{a} = d F_{a}$, where
\begin{equation*}
  F_{a} \left( t \right) = \max \left\{ 0, \min \left\{ 1, \frac{t}{a} \right\} \right\} ,
    \quad a > 1.
\end{equation*}

\begin{theorem} \label{THR_UNIFORMPROBABILITYPa}
  If $0 < \varepsilon = 1 - \tau < 1 < a$, then
  \begin{align*}
    \text{a)} \quad P_{a} \left( K \leq \varepsilon \right) 
      & = \frac{3}{4a \tau} - \frac{3 \tau}{4a} 
      + \frac{\tau}{2a} \ln \frac{\tau}{a} 
      + \frac{\ln \left( a \tau \right)}{2a \tau},\quad
      \text{if} \quad a \tau > 1, \\
    \text{b)} \quad P_{a} \left( K \leq \varepsilon \right) 
      & = 1 -\frac{3\tau}{4a} - \frac{a\tau}{4}
      + \frac{\tau}{2a} \ln \frac{\tau}{a}, \quad
      \text{if} \quad a \tau \leq 1.
  \end{align*}
\end{theorem}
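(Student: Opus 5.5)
The plan is to reduce everything to the unit cube by scaling and then follow the route of the proof of Theorem \ref{THR_PKUNITUNIFORM}. Put $x = au$, $z = av$, $y = aw$ with $(u,v,w) \in Q$; then $P_a$ becomes the uniform measure on $Q$.

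From the piecewise form of $K$ used in the proof of Theorem \ref{THR_PKUNITUNIFORM}, the event $K \leq \varepsilon = 1-\tau$ is exactly
\begin{equation*}
  \tau \leq \frac{y}{xz} \leq \frac{1}{\tau},
  \quad \text{that is,} \quad
  a\tau\, uv \leq w \leq \frac{a}{\tau}\, uv .
\end{equation*}
Integrating first in $w$ over $(0,1]$ then gives
\begin{equation*}
  P_a \left( K \leq \varepsilon \right) = G\left( \frac{a}{\tau} \right) - G\left( a\tau \right),
  \qquad
  G(c) = \int_0^1 \int_0^1 \min\left\{ 1, c\,uv \right\} du\, dv .
\end{equation*}
So the whole statement reduces to computing the single auxiliary function $G(c)$ for all $c > 0$.

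Computing $G$ is the main step, and it splits into two cases.
\begin{itemize}
\item If $c \leq 1$, then $c\,uv \leq 1$ on all of $Q$, so $G(c) = c/4$.
\item If $c > 1$, split the square along the hyperbola $uv = 1/c$.
\begin{itemize}
\item On $\{uv > 1/c\}$ the integrand equals $1$. This region has area $1 - \frac{1}{c} - \frac{\ln c}{c}$: for $u > 1/c$ the admissible $v$ run from $1/(cu)$ to $1$.
\item On $\{uv \leq 1/c\}$, integrate $c\,uv$ over $v \in (0,1]$ for $u \leq 1/c$, and over $v \in (0, 1/(cu)]$ for $u > 1/c$. This gives $\frac{1}{4c} + \frac{\ln c}{2c}$.
\end{itemize}
Adding the two pieces,
\begin{equation*}
  G(c) = 1 - \frac{3}{4c} - \frac{\ln c}{2c}, \qquad c > 1 .
\end{equation*}
\end{itemize}
This split along $uv = 1/c$ is the only place requiring care, in the same way as the three-piece integral in the proof of Theorem \ref{THR_PKUNITUNIFORM}. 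I expect it to be the main, though modest, obstacle: one must get the logarithmic area term and the bookkeeping of the two subregions right.

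Finally, since $a > 1 > \tau$, we always have $a/\tau > 1$, so $G(a/\tau)$ comes from the second formula:
\begin{equation*}
  G\left( \frac{a}{\tau} \right) = 1 - \frac{3\tau}{4a} + \frac{\tau}{2a}\ln\frac{\tau}{a}.
\end{equation*}
If $a\tau > 1$, subtract $G(a\tau) = 1 - \frac{3}{4a\tau} - \frac{\ln(a\tau)}{2a\tau}$; the constants $1$ cancel and formula a) results. If $a\tau \leq 1$, subtract $G(a\tau) = a\tau/4$ instead, which yields formula b) directly.

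As a sanity check, both expressions should agree at $a\tau = 1$. Formally letting $a \to 1$ should also recover the result of Theorem \ref{THR_PKUNITUNIFORM}, since $P_a \to P$ in that limit and $a\tau \leq 1$ there.
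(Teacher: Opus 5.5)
Your proof is correct and follows essentially the paper's route. Both arguments rescale to the unit cube, write $P_a(K\le\varepsilon)$ as the difference of one auxiliary double integral evaluated at two arguments, and split according to whether the argument exceeds $1$. The only difference is cosmetic: the paper uses $\mathcal{V}(t)=\int_0^1\int_0^1[\min\{1,xz/t\}-txz]\,dz\,dx = G(1/t)-t/4$ so that it can import the case $t<1$ directly from the proof of Theorem~\ref{THR_PKUNITUNIFORM}, at the cost of a correction term $-\frac14(\frac{1}{a\tau}-\frac{\tau}{a})$, whereas you compute $G$ from scratch.
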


\begin{proof}
  Inequality $K \leq 1 - \varepsilon$ is satisfied in $Q_{a}$ if
  and only if 
  \begin{equation*}
    \min \left\{ a, \varepsilon xz \right\} \leq y
    \leq \min \left\{ a, \frac{xz}{\varepsilon} \right\} ,\quad \left( x,z,y\right) \in Q_{a}.
  \end{equation*}
  Hence, one can express the following probability in integral
 form:
  \begin{equation*}
    P_{a} \left( K \leq 1 - \varepsilon \right) = 
      \frac{1}{a ^{3}} \int_{0} ^{a} \int_{0} ^{a}
      \left[ \min \left\{ a, \frac{xz}{\varepsilon} \right\} - \min \left\{ a, \varepsilon xz \right\} \right] dzdx.
  \end{equation*}
  By changing variables $z = av$ and $x = au$, we obtain
  \begin{align*}
  P_{a} \left( K \leq 1 - \varepsilon \right) 
    & = \int_{0} ^{1} \int_{0} ^{1} \left[ 
      \min \left\{ 1, \frac{auv}{\varepsilon} \right\} - \frac{\varepsilon uv}{a} \right] dvdu \\
    & \quad - \int_{0} ^{1} \int_{0} ^{1} \left[ 
      \min \left\{ 1, a \varepsilon uv \right\} 
      - \frac{uv}{a \varepsilon} + \frac{uv}{a \varepsilon}
      - \frac{\varepsilon uv}{a} \right] dvdu \\
    & = \mathcal{V} \left( \frac{\varepsilon}{a} \right) 
      - \mathcal{V} \left( \frac{1}{a \varepsilon} \right) 
      - \frac{1}{4} \left( \frac{1}{a \varepsilon} - \frac{\varepsilon}{a} \right), 
  \end{align*}
  where
  \begin{align*}
    \mathcal{V} \left( t \right)
    & = \int_{0} ^{1}\int_{0} ^{1}\left[ \min
      \left\{ 1, \frac{xz}{t} \right\} - txz\right] dzdx \\
    & = \left\{ 
      \begin{array}{ll}
        1 - t + \frac{1}{2} t \ln t, & 0 < t < 1, \\
        \frac{1}{4t} - \frac{t}{4}, & t \geq 1.
      \end{array}
    \right.
  \end{align*}
  Indeed, the first formula has been computed in the proof of Theorem \ref{THR_PKUNITUNIFORM} and the second is a direct consequence of identity
  $\min \left\{ 1, \frac{xz}{t} \right\} = \frac{xz}{t}$,
  $t \geq 1$. Finally, if $a \varepsilon > 1$, then
  \begin{equation*}
    P_{a} \left( K \leq 1 - \varepsilon \right) 
      = \frac{3}{4a \varepsilon} - \frac{3 \varepsilon}{4a} 
      + \frac{\varepsilon}{2a} \ln \frac{\varepsilon}{a} 
      + \frac{\ln \left( a \varepsilon \right)}{2a \varepsilon},
  \end{equation*}
  and if $a \varepsilon \leq 1$, then
  \begin{equation*}
    P_{a} \left( K \leq 1 - \varepsilon \right)
     = 1 - \frac{3 \varepsilon}{4a} - \frac{a \varepsilon}{4} 
     + \frac{\varepsilon}{2a} \ln \frac{\varepsilon}{a}.
  \end{equation*}
\end{proof}

The formulae in Theorems \ref{THR_PKUNITUNIFORM} and
\ref{THR_UNIFORMPROBABILITYPa} admit efficient computation of
the probabilities $P \left( K \leq \varepsilon \right)$ and 
$P_{a} \left( K \leq \varepsilon \right)$ of getting by chance 
a PC matrix $M$ having the inconsistency indicator
$K \left( M \right)$ not greater than $\varepsilon$. These
formulae yield the following algorithm for computing the
probability $P_{a}$.

\begin{algorithm}
\caption{\small Computing the uniform probability of 
inconsistency $P_{a} \left( K \leq \varepsilon \right)$.}\label{ALG_PROBABILITYUNIFORM}
\begin{algorithmic}[1]
{\small
  \REQUIRE the level of inconsistency $\varepsilon \in \left( 0, 1 \right)$ and the scale $a$.
  \ENSURE the probability $P_{a} \left( K \leq \varepsilon \right)$.
  \STATE $\tau \leftarrow 1 - \varepsilon$
  \STATE $s \leftarrow - \frac{3\tau}{4a} + \frac{\tau}{2a}\ln\frac{\tau}{a}$
  \IF{$a\tau \leq 1$}
    \STATE $P \leftarrow 1  - \frac{a\tau}{4} + s$
  \ELSE
    \STATE $P \leftarrow \frac{3}{4a\tau} + s + \frac{\ln\left(a\tau\right)}{2a\tau}$
  \ENDIF \\
  \RETURN $P$
}
\end{algorithmic}
\end{algorithm} \medskip

\noindent {\bf Numerical experiments}\\
In the experiments, we used Algorithm \ref{ALG_PROBABILITYUNIFORM} to compute the probability values $P_{a} \left( K \leq \varepsilon \right)$ for the parameters 
\begin{equation*}
  a = 1, 2, \ldots, 10 \quad \text{and} \quad 
  \varepsilon = 0.05,\ 0.1,\ 0.2,\ \ldots\ 0.9,\ 0.95.
\end{equation*}
The numerical results were calculated with a precision of 4
decimal places. Figure \ref{FIG_UNIFORM} shows a plot of the probabilities $P_{a} \left( K \leq \varepsilon \right)$ 
generated for some values of given parameters. Each probability
we obtained was compared with that from a classical Monte
Carlo simulation \citep{Herm1s, Szyb1s}. For this purpose, we computed $N = 10 ^{7}$ random PC matrices $M$ by choosing random numbers $x, y, z$ each time from the interval $\left( 0, a\right]$, and counted how many times $K \left( M\right) \leq \varepsilon$. The ratio of the latter number and $N$ gave an approximation of the probability $P_{a} \left( K \leq \varepsilon \right) $. The results we derived from the Monte Carlo simulations differ from the previous ones by no more than $0.0004$.

\begin{figure}[ht]
  \centering \includegraphics[width=0.75\textwidth]{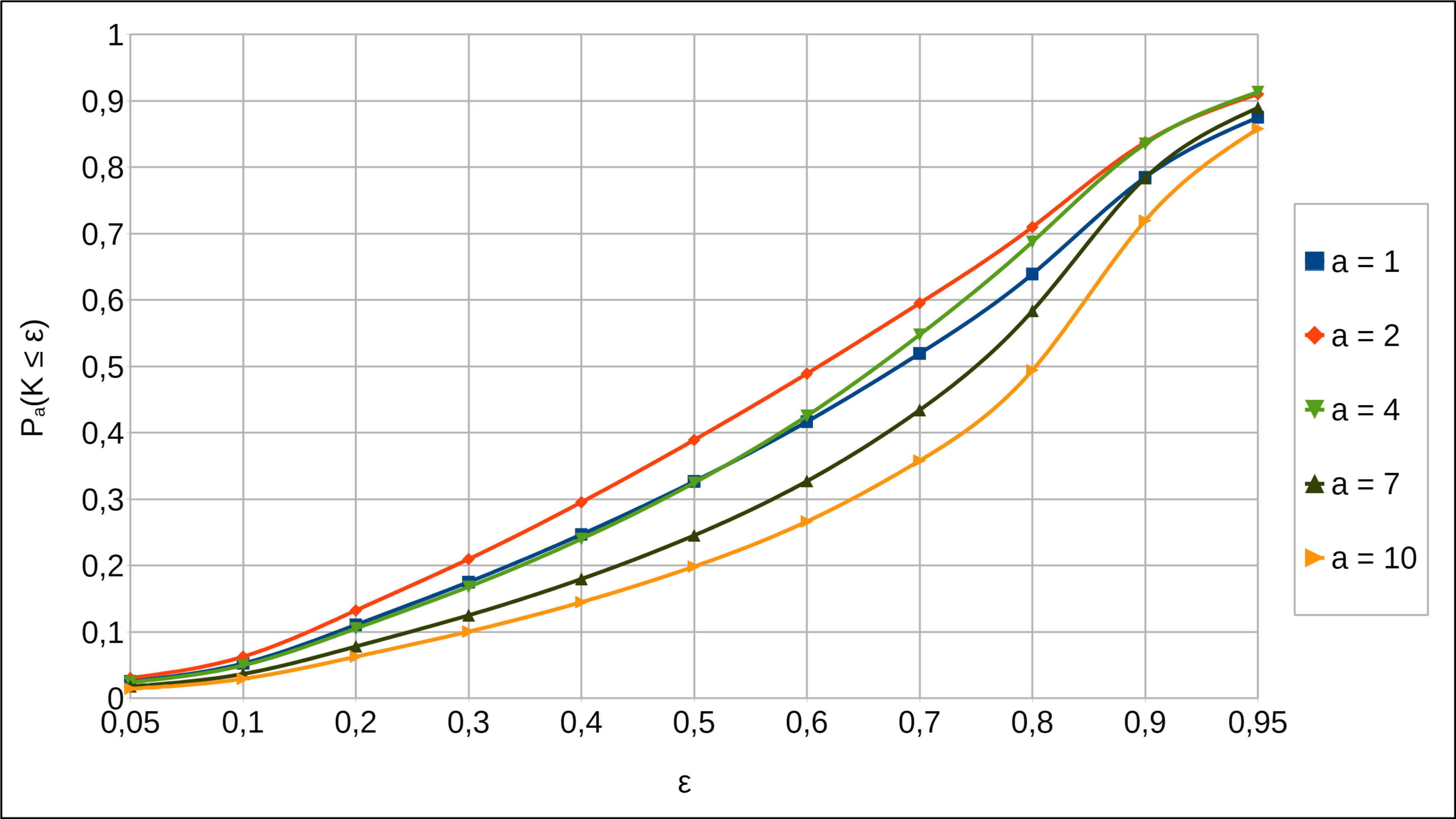}
  \caption{Uniform probabilities $P_{a} \left( K \leq \varepsilon \right) $ generated by Algorithm \ref{ALG_PROBABILITYUNIFORM} for various values of $a$.    
  }\label{FIG_UNIFORM}
\end{figure}

\section{Exponential perturbation of $3 \times 3$ matrices} \label{SEC_EXPONENTIAL3D}
It is of interest that an analytical formula for the probability $P \left( K \leq \varepsilon \right)$ can also be established for other distributions of positive data 
$\left( x, z, y \right)$, which define a PC matrix.
In this section, we consider the probabilistic inconsistency 
$P \left( K \leq \varepsilon \right)$ generated by the Lebesgue-Stieltjes product measure $P = d F$ in $\mathbb{R} ^{3}$ of the following one-dimensional exponential distribution and corresponding density
\begin{equation*}
  F \left( t \right) =\max \left\{ 0, 1 -e^{-\lambda t} \right\} \quad \text{and} \quad
  f \left( t \right) =\left\{ 
  \begin{array}{ll}
    0, & t < 0, \\ 
    \lambda e^{-\lambda t}, & t > 0.
  \end{array}
  \right. 
\end{equation*}
In this case, $P \left( K \leq \varepsilon \right)$ can be
expressed in terms of the exponential integrals 
$E_{1} \left( t \right) = -\Ei \left( -t \right)$ or the 
confluent hypergeometric function
$U \left( 1, 1, t \right) = e ^{t} E_{1} \left( t \right)$. We
recall that the exponential integral $E_{1}$ is defined
according to
\begin{equation*}
  E_{1} \left( t \right) = \int_{1} ^{\infty} \frac{e ^{-tx}}{x}
  dx = -\gamma - \ln t - \sum_{k = 1} ^{\infty} 
  \frac{ \left( -t \right) ^{k}}{k!k},\quad t > 0.
\end{equation*}
Here $\gamma$ is the Euler-Mascheroni constant with the numerical value $0.5772156649$, exact up to $10$ decimal places.

\begin{theorem} \label{THR_EXPPROBABILITY}
  If $0 < \varepsilon = 1 - \tau < 1$ and $\lambda > 0$, then
  \begin{equation*}
    P \left( K \leq \varepsilon \right) = 
      \frac{\lambda}{\tau} e ^{\frac{\lambda}{\tau}}
      E_{1} \left( \frac{\lambda}{\tau} \right) - \lambda \tau e ^{\lambda \tau} E_{1} \left( \lambda \tau \right) . 
  \end{equation*}
\end{theorem}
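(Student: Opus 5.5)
As in the proof of Theorem \ref{THR_PKUNITUNIFORM}, we start from the fact that $K \leq \varepsilon = 1-\tau$ holds if and only if $\tau xz \leq y \leq xz/\tau$. With $x,z,y$ independent and exponentially distributed, we first integrate out $y$. Since $P(y \leq s) = 1 - e^{-\lambda s}$ for $s \ge 0$, we get
\begin{equation*}
  P \left( K \leq \varepsilon \right) = \int_{0}^{\infty}\int_{0}^{\infty}
  \lambda^{2} e^{-\lambda (x+z)} \left[ e^{-\lambda \tau xz} - e^{-\lambda xz/\tau} \right] dz\,dx .
\end{equation*}
This splits the problem into two terms of the same shape, $I(c) = \int_0^\infty\int_0^\infty \lambda^2 e^{-\lambda(x+z)} e^{-cxz}\,dz\,dx$, evaluated at $c = \lambda\tau$ and at $c = \lambda/\tau$. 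The probability is then $I(\lambda\tau) - I(\lambda/\tau)$.

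Next we compute $I(c)$ by integrating in $z$ first:
\begin{equation*}
  \int_0^\infty \lambda e^{-(\lambda + cx) z}\,dz = \frac{\lambda}{\lambda + cx} .
\end{equation*}
This gives
\begin{equation*}
  I(c) = \int_0^\infty \frac{\lambda^2 e^{-\lambda x}}{\lambda + cx}\,dx .
\end{equation*}
We substitute $u = 1 + cx/\lambda$, so that $x = \lambda(u-1)/c$ and $dx = (\lambda/c)\,du$. With $\mu = \lambda^2/c$, this yields
\begin{equation*}
  I(c) = \frac{\lambda^{2}}{c} e^{\lambda^{2}/c} \int_{1}^{\infty} \frac{e^{-(\lambda^{2}/c) u}}{u}\,du = \mu e^{\mu} E_1(\mu).
\end{equation*}
Here we used the integral representation of $E_1$ recalled above.

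Finally we substitute the two values of $c$. For $c = \lambda\tau$ we get $\mu = \lambda/\tau$, and for $c = \lambda/\tau$ we get $\mu = \lambda\tau$. Hence
\begin{equation*}
  P(K\le\varepsilon) = \frac{\lambda}{\tau} e^{\lambda/\tau} E_1\!\left(\frac{\lambda}{\tau}\right) - \lambda\tau e^{\lambda\tau} E_1(\lambda\tau),
\end{equation*}
which is the claimed formula. The swapping of the parameters between the two terms is the one point to watch. As a sign check, the function $\mu e^\mu E_1(\mu)$ is increasing in $\mu$ and $\lambda/\tau > \lambda\tau$, so the expression is positive, as a probability must be.

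The main obstacle is only bookkeeping. We must choose the order of integration well (integrate $y$, then $z$, then $x$) so that each step stays elementary, and we must track which exponent pairs with which argument of $E_1$. Interchanging the integrals is justified by Tonelli's theorem, since every integrand is nonnegative.
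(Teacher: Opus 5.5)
Your proposal is correct and follows essentially the same route as the paper's proof. You integrate out $y$ via the exponential distribution function, evaluate the $z$-integral in closed form, and reduce the remaining $x$-integral to $E_{1}$ by the linear substitution $u = 1 + cx/\lambda$. Packaging both terms as one function $I(c)$ evaluated at $c=\lambda\tau$ and $c=\lambda/\tau$ is only a notational convenience; the paper writes the two terms separately and works with $\varepsilon$ in place of $\tau$, renaming at the end.
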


\begin{proof}
  By the definition of indicator $K$, we conclude that a point $\left( x, z, y \right) $ with positive coordinates satisfies the inequality $ K \left( x, z, y \right) \leq 1 - \varepsilon $ if and only if it belongs to the set
  \begin{equation*}
    D_{\varepsilon} = \left\{ \left( x, z, y\right) \in 
    \mathbb{R}^{3} : \varepsilon x z \leq y < 
    \frac{xz}{\varepsilon},\ x > 0,\ z > 0 \right\} .
  \end{equation*}
  Since we have $\int_{\mathbb{R}^{3}} dF = 1$, it follows that 
  \begin{align*}
    P \left( K \leq 1 - \varepsilon \right)
      & = \int_{0} ^{\infty} \int_{0} ^{\infty} 
        \int_{\varepsilon xz}^{ xz / \varepsilon} 
        \lambda e ^{-\lambda y} dy dF \left( z \right) 
        dF \left( x \right)  \\
      & = \int_{0} ^{\infty} \int_{0} ^{\infty}
        \left[ -\lambda e ^{-\lambda z \left( 1 + 
        \frac{x}{\varepsilon} \right)} + \lambda 
        e ^{-\lambda z \left( 1 + \varepsilon x \right)} \right] dz dF \left( x \right) \\
      & = \lambda \int_{0} ^{\infty} \left( 
        \frac{e ^{-\lambda x}}{1 + \varepsilon x} 
        - \frac{e ^{-\lambda x}}{1 + \frac{x}{\varepsilon}}
        \right) dx \\
      & = \frac{\lambda e ^{\frac{\lambda}{\varepsilon  }}}  
        {\varepsilon } \int_{1} ^{\infty}
        \frac{e ^{-\frac{\lambda x}{\varepsilon}}}{x} dx
        - \lambda \varepsilon e ^{\lambda \varepsilon}
        \int_{1} ^{\infty} \frac{e ^{-\lambda \varepsilon x}}
        {x}dx.
  \end{align*}
  Hence, the definition of the exponential integral $E_{1}$
  yields
  \begin{equation*}
    P \left( K \leq 1 - \varepsilon \right) 
      = -\lambda \varepsilon e ^{\lambda \varepsilon} 
      E_{1} \left( \lambda \varepsilon \right) +
      \frac{\lambda e ^{\frac{\lambda}{\varepsilon}}}
      {\varepsilon} E_{1} \left( \frac{\lambda }{\varepsilon} 
      \right),
  \end{equation*}
  which completes the proof.
\end{proof}

The probability in Theorem \ref{THR_EXPPROBABILITY} can be
rewritten in the form
\begin{equation}
  P \left( K \leq \varepsilon \right)
    = \lambda \int_{0} ^{\infty} g \left( x \right) e ^{-x}dx,
  \label{EQ_EXPONENTIALPROBABILITY}
\end{equation}
where
\begin{equation*}
  g \left( x \right) = \frac{1}{\lambda + 
  \left( 1 - \varepsilon \right) x} 
  - \frac{1 - \varepsilon}{\lambda 
  \left( 1 - \varepsilon \right) + x}.
\end{equation*}
Hence, it can be efficiently approximated by the Gauss-Laguerre
quadrature formula
\begin{equation*}
  P \left( K \leq \varepsilon \right) \approx \lambda 
  \sum_{k = 1}^{n} w_{k} g \left( x_{k} \right),
\end{equation*}
where $w_{k}$ are weights and $x_{k}$ nodes of the Gauss-Laguerre quadrature of order $n$ \citep{Gaut1,Niki1}.  
The weights and nodes can be easily precomputed using
the divide-and-conquer $O \left( n ^{2} \right)$-algorithm due to \cite{Gu1s} or by the iterative $O \left( n \right)$-algorithm of \cite{Glas1s}.
It is worth noting that the values of
$P \left( K \leq \varepsilon \right)$ can also be computed
by applying the MATLAB function
$\verb|expint(t)| \approx E_{1} \left( t \right)$. In this case,
we use Theorem \ref{THR_EXPPROBABILITY} to obtain a MATLAB command of the form: 
\begin{equation*}
  P \left( K \leq \varepsilon \right) \approx
    \verb|x * exp(x) * expint(x) - y * exp(y) * expint(y)|,
\end{equation*}
where
\begin{equation*}
  \verb|x| = \frac{\lambda}{1 - \varepsilon}
  \quad \text{and} \quad
  \verb|y| = \lambda \left( 1 - \varepsilon \right).
\end{equation*}

\noindent {\bf Numerical experiments} \\
As part of the experiments, we computed the probability
values $P$ (rounded to 4 decimal places) for the parameters
\begin{equation*}
  \varepsilon = 0.05,\ 0.1,\ 0.2,\ 0.3,\ 0.4,\ 0,5,\ 0.6,\ 0.7,
  \ 0.9,\ 0.95
\end{equation*}
and
\begin{equation*}
  \lambda = 0.1,\ 0.4,\ 0.7,\ 1,\ 1.5,\ 2,\ 2.5,\ 3,
  \ 3.5,\ 4,5
\end{equation*}
using the MATLAB command. Some selected results are presented
in Figure \ref{FIG_LAGUERRE}. For completeness, we compared the results with the outputs of a Monte Carlo simulation similar to
the one in Section \ref{SEC_UNIFORM3D}. 
Every 10 million random PC matrices generated according to an exponential distribution with parameter $\lambda > 0$ yielded a single approximate probability after verifying their inconsistency level. The maximum difference between the corresponding probabilities did not exceed $0.0003$.

\begin{figure}[!ht]
  \centering \includegraphics[width=0.75\textwidth]{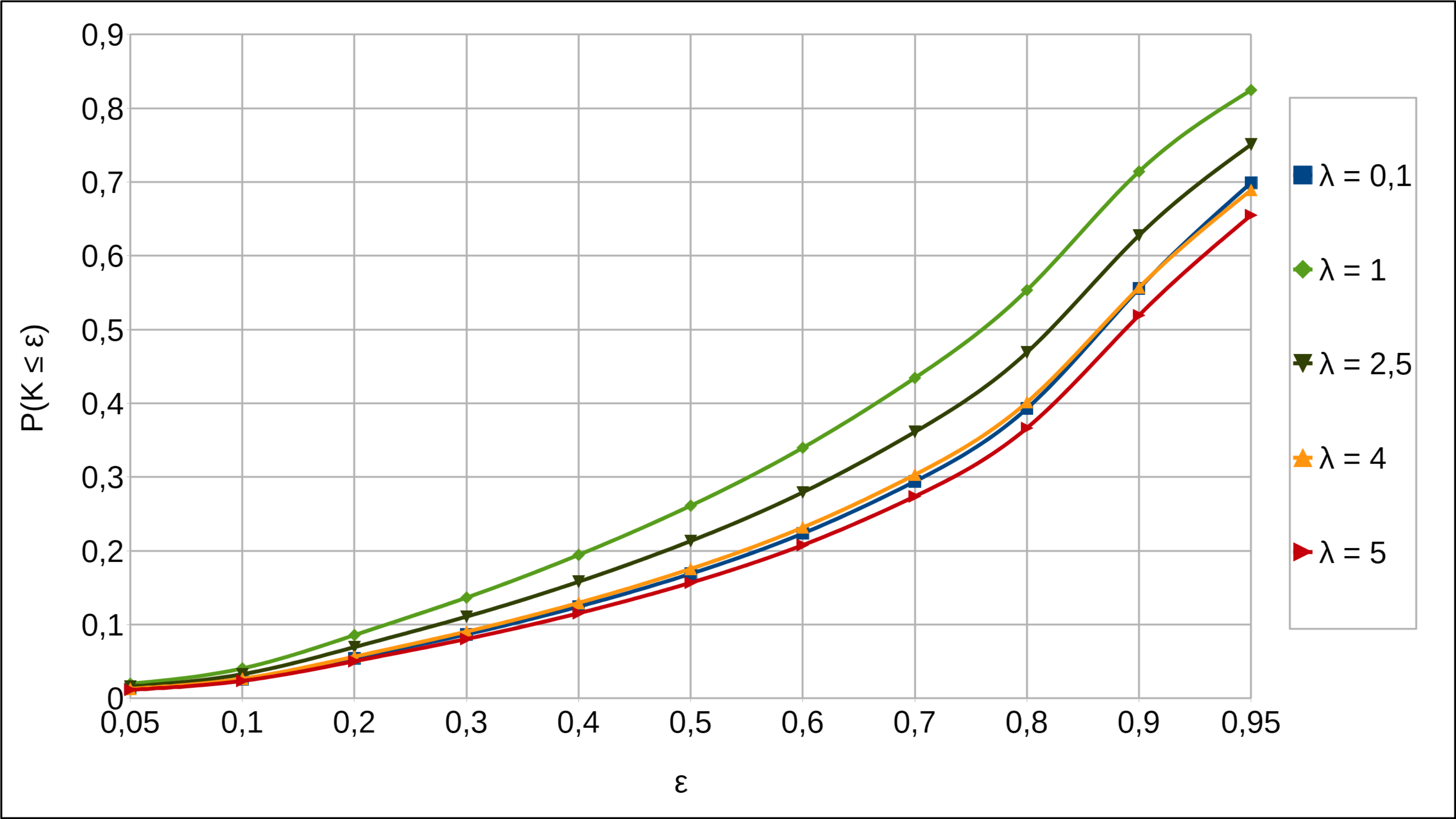}
  \caption{Exponential probabilities
  $P \left( K \leq \varepsilon \right) $ for different values of $\lambda$.}\label{FIG_LAGUERRE}
\end{figure}

\section{Normal perturbation of $3 \times 3$ matrices} \label{SEC_NORMAL3D}
Denote $\mathbb{R}_{+} ^{3} = \left\{ \left( x, z, y \right) :
x > 0, z > 0, y > 0 \right\}$ and consider the $\mathbb{R}_{+} ^{3}$-conditional normal measure of inconsistency
$P \left( K \leq \varepsilon \mid \mathbb{R}_{+} ^{3} \right)$ associated with the Lebesgue-Stieltjes measure 
$P = dF$ in $\mathbb{R} ^{3}$
generated by the density function of the form
\begin{equation*}
  f \left( t \right) = F ^{\prime} \left( t \right) 
    = \frac{1}{2} \frac{d}{dt} \Phi \left( 
    \frac{t - \mu}{\sigma \sqrt{2}} \right) 
    = \frac{1}{\sigma \sqrt{2 \pi}} \exp \left[ 
    -\frac{1}{2} \left( \frac{t - \mu}{\sigma} \right) ^{2} \right],
\end{equation*}
with the mean $\mu$ and the standard deviation $\sigma > 0$.
Without loss of generality, we shall assume below that $\mu = 0$. Under this assumption, we express the conditional probability
$P \left( K \leq \varepsilon \mid \mathbb{R}_{+} ^{3} \right)$ in terms of the arctan-exponential integral defined by
\begin{equation*}
  \Ti \left( x \right) = \int_{0} ^{\infty } \arctan 
    \left( tx \right) e ^{-t ^{2}} dt.
\end{equation*}
For this purpose, we employ the identity from \citep{Grad1}:
\begin{equation}
  \int_{0} ^{\infty} \Phi \left( t \right) e ^{-\alpha ^{2} 
    t ^{2}} dt =\frac{\sqrt{\pi}}{2 \alpha} 
    - \frac{\arctan \alpha}{\alpha \sqrt{\pi}}, \quad \RePart \alpha > 0. \label{EQ_ERRORFUNCINTEGRAL}
\end{equation}

\begin{theorem} \label{THR_NORMALPROBABILITY}
  If $0 < \varepsilon = 1 - \tau < 1$ and $\sigma > 0$, 
  then we have
  \begin{equation*}
    P \left( K \leq \varepsilon \mid \mathbb{R}_{+} ^{3}   
      \right) = \frac{4}{\pi \sqrt{\pi}} 
      \left[ \Ti \left( \frac{\sigma \sqrt{2}}{\tau} \right) 
           - \Ti \left( \tau \sigma \sqrt{2} \right) \right]. 
  \end{equation*}  
\end{theorem}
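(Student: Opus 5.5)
We reduce the conditional probability to an iterated integral of the same type as in the proofs of Theorems \ref{THR_PKUNITUNIFORM} and \ref{THR_EXPPROBABILITY}, and then evaluate the two inner integrals with the identity (\ref{EQ_ERRORFUNCINTEGRAL}). With $\mu=0$ we have $P(\mathbb{R}_{+}^{3}) = 1/8$. Hence the conditional measure on $\mathbb{R}_{+}^{3}$ has density $h(x)h(z)h(y)$, where
\begin{equation*}
  h(t) = 2f(t) = \frac{2}{\sigma\sqrt{2\pi}}\, e^{-t^{2}/(2\sigma^{2})}, \quad t>0,
\end{equation*}
is the half-normal density. Its distribution function is $H(t) = \Phi\left(\frac{t}{\sigma\sqrt{2}}\right)$, with $\Phi$ the error function as used in (\ref{EQ_ERRORFUNCINTEGRAL}). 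As in the earlier proofs, for positive coordinates $K \leq 1-\tau = \varepsilon$ holds if and only if $\tau xz \leq y \leq xz/\tau$. Boundaries have measure zero, so
\begin{equation*}
  P\left(K\leq \varepsilon \mid \mathbb{R}_{+}^{3}\right) = \int_{0}^{\infty}\int_{0}^{\infty}\left[ \Phi\left(\frac{xz}{\tau\sigma\sqrt{2}}\right) - \Phi\left(\frac{\tau xz}{\sigma\sqrt{2}}\right)\right] h(z)\,dz\, h(x)\,dx .
\end{equation*}

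Next we compute the inner $z$-integral in closed form. For $c>0$ put $z=\sigma\sqrt{2}\,s$ and then $u = c\sigma\sqrt{2}\,s$. This gives
\begin{equation*}
  \int_{0}^{\infty}\Phi(cz)h(z)\,dz = \frac{2\alpha}{\sqrt{\pi}}\int_{0}^{\infty}\Phi(u)e^{-\alpha^{2}u^{2}}du, \qquad \alpha = \frac{1}{c\sigma\sqrt{2}} .
\end{equation*}
By (\ref{EQ_ERRORFUNCINTEGRAL}) the right-hand side equals $1 - \frac{2}{\pi}\arctan\alpha$. Since $\arctan\alpha + \arctan(1/\alpha) = \pi/2$ for $\alpha>0$, this is $\frac{2}{\pi}\arctan(c\sigma\sqrt{2})$. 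We apply this with $c = x/(\tau\sigma\sqrt2)$ and $c = \tau x/(\sigma\sqrt2)$. The inner integral then becomes $\frac{2}{\pi}\left[\arctan(x/\tau) - \arctan(\tau x)\right]$, and $\sigma$ has disappeared at this stage.

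Finally we do the outer integral with the substitution $x = \sigma\sqrt{2}\,t$, so that $h(x)\,dx = \frac{2}{\sqrt\pi}e^{-t^{2}}dt$. This yields
\begin{equation*}
  \frac{4}{\pi\sqrt{\pi}}\int_{0}^{\infty}\left[\arctan\left(\frac{\sigma\sqrt2}{\tau}\,t\right) - \arctan\left(\tau\sigma\sqrt2\, t\right)\right]e^{-t^{2}}dt ,
\end{equation*}
which is exactly $\frac{4}{\pi\sqrt\pi}\left[\Ti\left(\frac{\sigma\sqrt2}{\tau}\right) - \Ti(\tau\sigma\sqrt2)\right]$ by the definition of $\Ti$.

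The main obstacle is the bookkeeping in the second step. One must match the normalisation of $\Phi$ (the error function) in (\ref{EQ_ERRORFUNCINTEGRAL}) with the half-normal distribution function, and track the scaling constants through both substitutions. The reduction via $\arctan\alpha + \arctan(1/\alpha) = \pi/2$ is what makes the $\arctan$ arguments come out as $x/\tau$ and $\tau x$. Splitting the integral into the difference of the two $\Phi$ terms and applying Fubini is justified because all integrands are bounded and nonnegative.
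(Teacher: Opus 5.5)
Your proof is correct and follows the same route as the paper. You write the conditional probability as an iterated integral of distribution-function differences, evaluate the inner $z$-integral with identity \eqref{EQ_ERRORFUNCINTEGRAL} to get a multiple of $\arctan(x/\tau)-\arctan(\tau x)$, and substitute $x=\sigma\sqrt{2}\,t$ in the outer integral to reach $\Ti$.

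The only variation is in the inner integral. The paper first integrates by parts in $z$, which puts the error function on $z$ and the Gaussian on the scaled argument, so $\alpha=\tau x$ and $\alpha=x/\tau$ appear directly; the constant $\tfrac12$ in the full normal $F(z)$ is dropped without comment. You instead apply the identity with the roles reversed, against the half-normal density, and recover the same arguments through $\arctan\alpha+\arctan(1/\alpha)=\pi/2$.
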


\begin{proof}
  The conditional probability is equal to
  \begin{equation*}
  P \left( K \leq 1 - \varepsilon \mid \mathbb{R}_{+} ^{3} 
      \right) = 8\int_{0}^{\infty }\int_{0}^{\infty }\left[ 
      F \left( \frac{xz}{\varepsilon} \right) 
      - F \left( \varepsilon xz \right) \right] dF 
      \left( z \right) dF \left( x \right).
  \end{equation*}
  Integrating by parts, we express this formula in the form
  \begin{align*}
   P \left( K \leq 1 - \varepsilon \mid \mathbb{R}_{+} ^{3} \right)
     & = -8 \int_{0} ^{\infty} \int_{0} ^{\infty}
     \left[ \frac{x}{\varepsilon} F ^{\prime} \left( 
     \frac{xz}{\varepsilon} \right) - \varepsilon x 
     F ^{\prime} \left( \varepsilon xz \right) \right] 
     F \left( z \right) dz d F \left( x \right)  \\
     & = \frac{4}{\sigma \sqrt{2\pi}} \int_{0} ^{\infty} \int_{0} ^{\infty} \Phi \left( \frac{z}{\sigma \sqrt{2}} \right) \left[ \varepsilon x e ^{-\left( 
     \frac{\varepsilon xz}{\sigma \sqrt{2}} \right) ^{2}}
     - \frac{x}{\varepsilon} e ^{-\left( 
     \frac{xz}{\sigma \sqrt{2}\varepsilon} \right) ^{2}}
     \right] dz d F\left( x \right).
  \end{align*}
  Now we can apply the identity \eqref{EQ_ERRORFUNCINTEGRAL} twice to get
  \begin{align*}
   P \left( K \leq 1 - \varepsilon \mid \mathbb{R}_{+} ^{3} \right)
    & = \frac{4}{\pi} \int_{0} ^{\infty} \left( \arctan 
    \frac{x}{\varepsilon} - \arctan \left( \varepsilon x \right)
    \right) F ^{\prime} \left( x \right) dx \\
    & = \frac{4}{\pi \sqrt{\pi}} \int_{0} ^{\infty}
    \left( \arctan \frac{\sigma \sqrt{2} x}{\varepsilon} 
    - \arctan \left( \varepsilon \sigma \sqrt{2} x \right) 
    \right) e ^{-x ^{2}} dx.
  \end{align*}
  Hence, the definition of the arctan-exponential integral yields the desired formula.
\end{proof}

Note that the probability in Theorem \ref{THR_NORMALPROBABILITY} can be reformulated into
\begin{equation}
  P \left( K \leq \varepsilon \mid \mathbb{R}_{+} ^{3}   
    \right) = \frac{2}{\pi \sqrt{\pi}}
    \int_{-\infty} ^{\infty} h \left( x \right) e ^{-x ^{2}} dx,
  \label{EQ_NORMAL_PROBABILITY}
\end{equation}
upon using the notation
\begin{equation*}
  h \left( x \right) = \left\lvert \arctan 
    \frac{\sigma \sqrt{2} x}{1 - \varepsilon} \right\rvert 
    - \left\lvert \arctan \left( \left( 1 - \varepsilon \right) \sigma \sqrt{2} x \right) \right\rvert.
\end{equation*}
Hence, it can be effectively approximated by the Gauss-Hermite quadrature formula
\begin{equation*}
  P \left( K \leq \varepsilon \mid \mathbb{R}_{+} ^{3}   
    \right) \approx \frac{2}{\pi \sqrt{\pi}}
    \sum_{k = 1} ^{n} w_{k} h \left( x_{k} \right),
\end{equation*}
where $w_{k}$ are weights and $x_{k}$ are nodes of the Gauss-Hermite quadrature of order $n$ \citep{Gaut1,Niki1}. As in the exponential case, these weights and nodes can be precomputed by using one of the popular algorithms such as the \cite{Gu1s} method or \cite{Glas1s} procedure. These calculations can be simplified by computing 
$P \left( K \leq \varepsilon \mid \mathbb{R}_{+} ^{3}   
\right)$ with the following MATLAB commands, derived from Theorem \ref{THR_NORMALPROBABILITY}:
\begin{verbatim}
  g = @(x, t) atan(t .* x) .* exp(-t .^ 2)
  Ti = @(x) integral(@(t) g(x, t), 0, Inf)
  P = @(eps, sig) (Ti(sig .* sqrt(2) ./ (1 - eps)) 
      - Ti((1 - eps) .* sig .* sqrt(2))) 
      .* 4 ./ (pi .* sqrt(pi))
\end{verbatim}
where the parameters \verb|eps| and \verb|sig| correspond to 
$\varepsilon$ and $\sigma$, respectively, and 
$\verb|P| \approx P \left( K \leq \varepsilon \mid 
\mathbb{R}_{+} ^{3} \right)$.
\medskip

\noindent {\bf Numerical experiments} \\
We used MATLAB to compute the probability values
$P \left( K \leq \varepsilon \mid 
\mathbb{R}_{+} ^{3} \right)$ for the parameters
\begin{equation*}
  \varepsilon = 0.05,\ 0.1,\ 0.2,\ 0.3,\ 0.4,\ 0,5,\ 0.6,\ 0.7,
  \ 0.9,\ 0.95
\end{equation*}
and
\begin{equation*}
  \sigma = 0.1,\ 0.2,\ 0.3,\ 0.4,\ 0.5,\ 0.6,\ 0.7,\ 0.8,
  \ 0.9,\ 1.0 .
\end{equation*}
The results we obtained were accurate to 4 decimal places.
They are illustrated in Figure \ref{FIG_HERMITE} as a three dimensional surface. In line with the previous sections,
we compared the generated probability values with the 
corresponding outputs of a Monte Carlo simulation in which
every random PC matrix $M$ had normally distributed entries $x, y, z$. The maximum discrepancy between associated 
probabilities remained within $0,0003$.

\begin{figure}[!ht]
  \centering \includegraphics[width=0.6\textwidth]{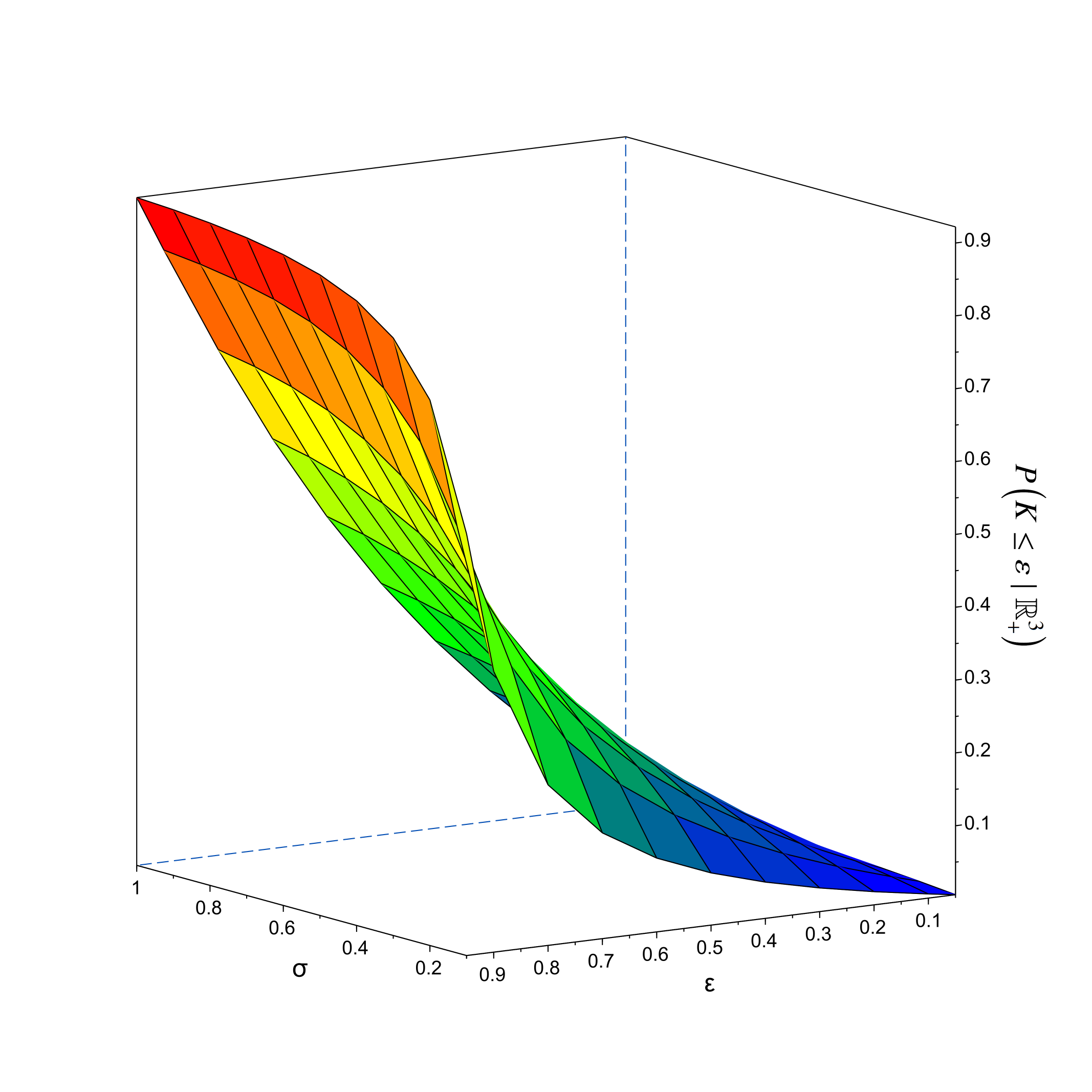}
  \caption{Normal probabilities $P \left( K \leq \varepsilon \mid \mathbb{R}_{+} ^{3} \right)$ for various values of $\varepsilon$ and $\sigma$.} \label{FIG_HERMITE}    
\end{figure}

\section{Perturbation of $n \times n$ matrices for $n > 3$} \label{SEC_PROBABILITYORDERn}
In this section, we consider the problem of computing probability
$P_{n} \left( K_{n} \leq \varepsilon \right)$ for any
Lebesgue-Stieltjes product measure $P_{n} = dF_{n}$ in
$\mathbb{R} ^{n}$, where $n \geq 3$. If $n = 3$, we denote
$P = P_{3}$. Then the indicator $K_{n} \left( M \right)$ of a
$n \times n$ PC matrix $M$ can be defined as in Section \ref{SEC_INTRO}:
\begin{equation*}
  K_{n} \left( M \right) = \max_{\left( i, j, k\right) \in
  \mathcal{T} \left( n \right)} K \left( M_{ijk} \right) ,
\end{equation*}
where $K$ is the Koczkodaj indicator
\begin{equation*}
  K \left( M_{ijk} \right) = 1 - \min \left\{ 
    \frac{ x_{j} }{ x_{i}x_{k} },
    \frac{ x_{i}x_{k} }{ x_{j} } \right\}
\end{equation*}
for $3 \times 3$ submatrices of the form
\begin{equation*}
    M_{ijk} = \left[
    \begin{array}{lll}
      1 & x_{i} & x_{j} \\
      1 / x_{i} & 1 & x_{k} \\
      1 / x_{j} & 1 / x_{k} & 1
    \end{array}  \right] .
\end{equation*}
However, this definition is disadvantageous for $n > 3$,
because $\binom{n}{3}$ random variables 
$M_{ijk} = M \left( x_{i}, x_{j}, x_{k} \right)$ are not
independent in general. From now on, we assume that the random
variables $M_{ijk}$ and $M_{pqr}$ are independent if the index
sets $\left\{ i, j, k \right\}$ and $\left\{ i, j, k \right\}$
are disjoint. A set of such variables is said to be independent
if every two of its elements are independent.

Unless otherwise stated, from now on we assume Hypothesis
\ref{HYP_NONOVERLAPPING} and choose a subset
$\mathcal{S} \left( n \right) \subset \mathcal{T} 
\left( n \right)$, described there. Furthermore, we suppose
that the definition of $K_{n} \left( M \right)$ is refined as
follows
\begin{equation*}
  K_{n} \left( M \right) = \max_{\left( i, j, k\right) \in
  \mathcal{S} \left( n \right)} K \left( M_{ijk} \right) .
\end{equation*}
Then the following important theorem follows from the
independence of random variables $M_{ijk}$, written under the
maximum.

\begin{theorem} \label{THR_PROBABILITYWITHSn}
  Under the assumption of Hypothesis \ref{HYP_NONOVERLAPPING},
  we have
  \begin{equation*}
    P_{n} \left( K_{n} \leq \varepsilon \right) = 
      \left[ P \left( K \leq \varepsilon \right) \right] 
      ^{\left\lvert \mathcal{S} \left( n \right) \right\rvert}
  \end{equation*}
  where 
  $\left\lvert \mathcal{S} \left( n \right) \right\rvert$
  is the number of elements in $\mathcal{S} \left( n \right)$.
\end{theorem}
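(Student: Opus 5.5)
The plan is to rewrite the event $\{K_n \leq \varepsilon\}$ as an intersection of events indexed by $\mathcal{S}(n)$. Then we use the product structure of $P_n$ to factor its probability.

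\emph{Step 1: rewrite the event.} By the refined definition,
\begin{equation*}
  \left\{ K_{n} \leq \varepsilon \right\} = \bigcap_{(i,j,k)\in\mathcal{S}(n)} \left\{ K\left(M_{ijk}\right) \leq \varepsilon \right\},
\end{equation*}
since a maximum of finitely many numbers is at most $\varepsilon$ exactly when each of them is.

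\emph{Step 2: each event depends on its own block of coordinates.} By Hypothesis \ref{HYP_NONOVERLAPPING}, every upper-triangular entry $m_{rs}$ belongs to exactly one submatrix $M_{ijk}$ with $(i,j,k)\in\mathcal{S}(n)$. Hence the $\binom{n}{2}$ coordinates of the product space split into disjoint blocks of three, one block per triad. Each event $\{K(M_{ijk})\leq\varepsilon\}$ is a Borel cylinder set determined only by the three coordinates of its own block. Since $P_n=dF_n$ is a product of one-dimensional Lebesgue--Stieltjes measures, the $\sigma$-algebras generated by disjoint coordinate blocks are mutually independent. This is full mutual independence, which is what the factorisation requires; it is not just the pairwise independence mentioned before the theorem. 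So by Fubini (or the defining property of product measure on rectangles), the probability of the intersection equals the product of the individual probabilities.

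\emph{Step 3: identify each factor.} The marginal of $P_n$ on one block is the three-fold product of the same one-dimensional distribution, that is, $P=P_3$. Relabelling the block coordinates as $(x,z,y)$ identifies each factor with $P(K\leq\varepsilon)$, the quantity computed in Sections \ref{SEC_UNIFORM3D}--\ref{SEC_NORMAL3D}. Here one must check that $K$ is symmetric enough, or that the one-dimensional factors are identical, so that the ordering of the three entries inside $M_{ijk}$ does not matter. Since all coordinates carry the same one-dimensional law, the value is the same for every triad. Multiplying the $|\mathcal{S}(n)|$ equal factors gives the claim.

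\emph{Main obstacle.} The delicate point is Step 2: making rigorous that the events are jointly independent, not merely pairwise. I would handle this directly through the product measure and the disjointness of coordinate blocks guaranteed by the hypothesis, namely that each entry lies in exactly one chosen triad. Independence is a measure-theoretic fact about products, not an assumption. A smaller secondary point is the measurability of $K$, which holds off a null set because of its continuity on the open positive orthant noted in Section \ref{SEC_UNIFORM3D}. For the normal case, the same argument applies to the conditional measure restricted to $\mathbb{R}_+^{\binom{n}{2}}$, which is again a product measure.
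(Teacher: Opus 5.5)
Your proposal is correct and follows essentially the paper's route: Hypothesis \ref{HYP_NONOVERLAPPING} splits the $\binom{n}{2}$ coordinates into disjoint three-entry blocks, the product structure of $P_n$ (via Fubini) makes the events $\{K(M_{ijk}) \leq \varepsilon\}$ independent, and the intersection factorises into $|\mathcal{S}(n)|$ copies of $P(K \leq \varepsilon)$. Your write-up is more careful than the paper's brief proof, in particular in insisting on mutual rather than pairwise independence and in checking that each block's marginal is exactly $P = P_3$.
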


\begin{proof}
  By Hypothesis \ref{HYP_NONOVERLAPPING} the matrix $M$ is a
  Cartesian product of disjoint submatrices 
  $M_{ijk} = M \left( x_{i}, x_{j}, x_{k} \right)$, where
  $\left( i, j, k \right) \in \mathcal{S} \left( n \right)$.
  Then we can apply Fubini’s theorem to show that the random
  variables $M_{ijk}$ are independent. Hence, the formula
  follows directly from the definition of independency.
\end{proof}

This theorem can be used to numerically compute
probabilities $P_{n} \left( K_{n} \leq \varepsilon \right)$ in
the case of uniform, exponential, and normal perturbations. For
this purpose, we recommend applying efficient procedures from Sections \ref{SEC_UNIFORM3D} - \ref{SEC_NORMAL3D} to compute the probability $P \left( M \leq \varepsilon \right)$.

Now we proceed to determine a subset of disjoint submatrices
$M_{ijk} = M \left( x_{i}, x_{j}, x_{k} \right)$ of $M$ having
maximal length
$\left\lvert \mathcal{S} \left( n \right) \right\rvert$. They
will be represented by a subset $\mathcal{S} \left( n \right)$
of the triad set $\mathcal{T} \left( n \right)$. In general,
such a set does not satisfy Hypothesis \ref{HYP_NONOVERLAPPING}.
Moreover, it is not uniquely determined. 

\begin{theorem} \label{THR_STRONGPROBABILITY}
  The set of triads $\mathcal{S} \left( n \right)$, $n \geq 3$,
  of length
  $\left\lvert \mathcal{S} \left( n \right) \right\rvert =
    \frac{n \left( n - 1 \right)}{6}$
  satisfies Hypothesis \ref{HYP_NONOVERLAPPING} for
  $n \equiv 1 \left(\text{mod } 6\right)$ or $n \equiv 3 \left(\text{mod } 6\right)$.
  Then we have
  \begin{equation*}
    P_{n} \left( K_{n} \leq \varepsilon \right) = 
      \left[ P \left( K \leq \varepsilon \right) \right] 
      ^{n \left( n - 1 \right) / 6}.
  \end{equation*}                                     
\end{theorem}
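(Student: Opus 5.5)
The plan is to identify $\mathcal{S}(n)$ with the block set of a Steiner triple system $\mathrm{STS}(n)$ on the vertex set $\{1,\dots,n\}$ of alternatives, and then apply Theorem \ref{THR_PROBABILITYWITHSn}. The point is that an entry $m_{rs}$ ($r<s$) above the diagonal corresponds to the unordered pair $\{r,s\}$. The principal submatrix $M_{ijk}$ contains exactly the three off-diagonal upper entries $m_{ij}, m_{ik}, m_{jk}$, i.e.\ the three pairs contained in $\{i,j,k\}$. Hence Hypothesis \ref{HYP_NONOVERLAPPING} says precisely that every 2-subset of $\{1,\dots,n\}$ lies in exactly one triple of $\mathcal{S}(n)$. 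This is the defining property of a Steiner triple system of order $n$ (equivalently, a decomposition of the complete graph $K_n$ into edge-disjoint triangles).

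Counting pairs gives the size. Each triple covers $3$ pairs and the pairs are partitioned, so $3\lvert\mathcal{S}(n)\rvert=\binom{n}{2}$, i.e.\ $\lvert\mathcal{S}(n)\rvert=n(n-1)/6$. For this to be an integer, and for each point $i$ to have its $n-1$ partners split into pairs by the triples through $i$ (so $n-1$ even), one needs $n\equiv 1,3 \pmod 6$. This explains the congruence condition and shows the length is forced.

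The existence part is the only substantive step. I would invoke Kirkman's theorem (cited as \cite{Kirk1s}, with constructions in \cite{Lind1, Fort1s}): an $\mathrm{STS}(n)$ exists if and only if $n\equiv 1$ or $3 \pmod 6$. To keep the plan self-contained, I would indicate the standard constructions. The Bose construction handles $n=6t+3$: take $\mathbb{Z}_{2t+1}\times\mathbb{Z}_3$ with triples $\{(x,0),(x,1),(x,2)\}$ and $\{(x,i),(y,i),(\tfrac{x+y}{2},i+1)\}$ for $x\neq y$. The Skolem construction handles $n=6t+1$. One then verifies that each pair is covered exactly once. I expect this verification, or simply relying on the classical result, to be the main obstacle; I would cite it rather than reprove it.

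Finally, with $\mathcal{S}(n)$ a Steiner triple system, the sub-matrices $M_{ijk}$, $(i,j,k)\in\mathcal{S}(n)$, use pairwise disjoint sets of the $\binom{n}{2}$ independent entries. Therefore Theorem \ref{THR_PROBABILITYWITHSn} applies directly. Since each $K(M_{ijk})$ has the same distribution as $K$ under the product measure, we obtain $P_n(K_n\le\varepsilon)=[P(K\le\varepsilon)]^{n(n-1)/6}$.
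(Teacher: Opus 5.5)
Your proposal is correct and follows the paper's route. You identify $\mathcal{S}(n)$ with the block set of a Steiner triple system of order $n$, invoke Kirkman's existence result for $n\equiv 1,3 \pmod 6$, get $|\mathcal{S}(n)|=n(n-1)/6$ by counting the $\binom{n}{2}$ pairs three per triad, and conclude with Theorem \ref{THR_PROBABILITYWITHSn}; your extra remarks (necessity of the congruence, the Bose construction sketch, and that each $K(M_{ijk})$ is distributed as $K$) are accurate additions to the paper's terser argument.
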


\begin{proof}
  The set $\mathcal{S} \left( n \right)$ is, in fact, a set of
  3-element subsets of the set
  $\mathcal{I} \left( n \right) = \left\{ 1, 2, \ldots, n
  \right\}$, such that each pair of distinct elements of 
  $\mathcal{I} \left( n \right)$ occurs together in exactly one triad of $\mathcal{S} \left( n \right)$. Therefore, the pair $\left( \mathcal{I} \left( n \right),
  \mathcal{S} \left( n \right) \right)$ is a combinatorial
  configuration called \emph{a Steiner triple system} of order $n$ \citep{Lind1}. It is known that such a
  configuration exists if and only if
  $n \equiv 1 \left(\text{mod } 6\right)$ or $n \equiv 3 \left(\text{mod } 6\right)$
  \citep{Kirk1s}. In view of the fact that each
  triad in $\mathcal{S} \left( n \right)$ contains
  three of $\binom{n}{2}$ pairs of distinct
  elements of $\mathcal{I} \left( n \right)$ and using Theorem \ref{THR_PROBABILITYWITHSn} we may complete the proof.  
\end{proof}

In Design Theory, there are two main techniques for constructing
Steiner triple systems \citep{Lind1}. The Bose Construction \citep{Bose1s} allows one to produce 
$\mathcal{S} \left( n \right)$ of order
$n \equiv 3 \left(\text{mod } 6 \right)$, whereas the Skolem
Construction \citep{Skol1s} covers order
$n \equiv 1 \left(\text{mod } 6 \right)$. These methods
are based on idempotent and half-idempotent commutative quasigroups, respectively. For example, for
$n = 3, 7, 9$, these procedures generate the following sets of
triads:
\begin{align*}
  \mathcal{S} \left( 3 \right) & = \left\{ 
    \left( 1, 2, 3 \right) \right\},\\  
  \mathcal{S} \left( 7 \right) & = \left\{ 
    \left( 1, 2, 3 \right), \left( 2, 4, 7 \right),
    \left( 3, 5, 7 \right), \left( 1, 6, 7 \right),
    \left( 1, 4, 5 \right), \left( 2, 5, 6 \right),
    \left( 3, 4, 6 \right) \right\},\\
  \mathcal{S} \left( 9 \right) & = \left\{ 
    \left( 1, 2, 3 \right), \left( 4, 5, 6 \right),
    \left( 7, 8, 9 \right), \left( 1, 4, 8 \right),
    \left( 2, 5, 9 \right), \left( 3, 6, 7 \right) \right.,\\
    & \quad\ 
    \left. \left( 1, 5, 7 \right), \left( 2, 6, 8 \right),
    \left( 3, 4, 9 \right), \left( 2, 4, 7 \right),
    \left( 3, 5, 8 \right), \left( 1, 6, 9 \right) \right\}
\end{align*}
The last two sets are shown in Table \ref{TAB_SFOR6AND7},
in which the triads are represented by squares numerated
according to the order within $\mathcal{S} \left( n \right)$.
\begin{table*}[h] 
  \caption{The sets $\mathcal{S} \left( n \right)$ of disjoint triads for $n = 7$ and $9$.} \label{TAB_SFOR6AND7}
  \begin{equation*}
    \begin{pmatrix}
      1 & \fcolorbox{black}{lightblue}{$1$} & \fcolorbox{black}{lightblue}{$1$} & \fcolorbox{black}{lightblue}{$5$}
        & \fcolorbox{black}{lightblue}{$5$} & \fcolorbox{black}{lightblue}{$4$} & \fcolorbox{black}{lightblue}{$4$} \\
      ~ & 1 & \fcolorbox{black}{lightblue}{$1$} & \fcolorbox{black}{lightblue}{$2$} & \fcolorbox{black}{lightblue}{$6$} & \fcolorbox{black}{lightblue}{$6$} & \fcolorbox{black}{lightblue}{$2$} \\
      ~ & ~ & 1 & \fcolorbox{black}{lightblue}{$7$} & \fcolorbox{black}{lightblue}{$3$} & \fcolorbox{black}{lightblue}{$7$} & \fcolorbox{black}{lightblue}{$3$} \\
      ~ & ~ & ~ & 1 & \fcolorbox{black}{lightblue}{$5$} & \fcolorbox{black}{lightblue}{$7$} & \fcolorbox{black}{lightblue}{$2$} \\
      ~ & ~ & ~ & ~ & 1 & \fcolorbox{black}{lightblue}{$6$} & \fcolorbox{black}{lightblue}{$3$} \\
      ~ & ~ & ~ & ~ & ~ & 1 & \fcolorbox{black}{lightblue}{$4$} \\
      ~ & ~ & ~ & ~ & ~ & ~ & 1
    \end{pmatrix}
    \quad
    \begin{pmatrix}
      1 & \fcolorbox{black}{lightblue}{$1$} & \fcolorbox{black}{lightblue}{$1$} & \fcolorbox{black}{lightblue}{$4$} & \fcolorbox{black}{lightblue}{$7$} 
        & \fcolorbox{black}{lightblue}{$12$} & \fcolorbox{black}{lightblue}{$7$} & \fcolorbox{black}{lightblue}{$4$} & \fcolorbox{black}{lightblue}{$12$} \\
      ~ & 1 & \fcolorbox{black}{lightblue}{$1$} & \fcolorbox{black}{lightblue}{$10$} & \fcolorbox{black}{lightblue}{$5$} & \fcolorbox{black}{lightblue}{$8$} 
        & \fcolorbox{black}{lightblue}{$10$} & \fcolorbox{black}{lightblue}{$8$} & \fcolorbox{black}{lightblue}{$5$} \\
      ~ & ~ & 1 & \fcolorbox{black}{lightblue}{$9$} & \fcolorbox{black}{lightblue}{$11$} & \fcolorbox{black}{lightblue}{$6$} & \fcolorbox{black}{lightblue}{$6$} & \fcolorbox{black}{lightblue}{$11$} & \fcolorbox{black}{lightblue}{$9$} \\
      ~ & ~ & ~ & 1 & \fcolorbox{black}{lightblue}{$2$} & \fcolorbox{black}{lightblue}{$2$} & \fcolorbox{black}{lightblue}{$10$} & \fcolorbox{black}{lightblue}{$4$} & \fcolorbox{black}{lightblue}{$9$} \\
      ~ & ~ & ~ & ~ & 1 & \fcolorbox{black}{lightblue}{$2$} & \fcolorbox{black}{lightblue}{$7$} & \fcolorbox{black}{lightblue}{$11$} & \fcolorbox{black}{lightblue}{$5$} \\
      ~ & ~ & ~ & ~ & ~ & 1 & \fcolorbox{black}{lightblue}{$6$} & \fcolorbox{black}{lightblue}{$8$} 
        & \fcolorbox{black}{lightblue}{$12$} \\
      ~ & ~ & ~ & ~ & ~ & ~ & 1 & \fcolorbox{black}{lightblue}{$3$} & \fcolorbox{black}{lightblue}{$3$} \\
      ~ & ~ & ~ & ~ & ~ & ~ & ~ & 1 & \fcolorbox{black}{lightblue}{$3$} \\
      ~ & ~ & ~ & ~ & ~ & ~ & ~ & ~ & 1
    \end{pmatrix}
  \end{equation*}   
\end{table*}

If $n \not\equiv 1 \left(\text{mod } 6\right)$ and
$n \not\equiv 3 \left(\text{mod } 6\right)$, then we denote
by $\mathcal{Q} \left( n \right)$ a subset of
$\mathcal{T} \left( n \right)$ that has the minimum length
$\left\lvert \mathcal{Q} \left( n \right) \right\rvert$
for which every entry $m_{rs}$ above the main diagonal of $M$
is an entry of at least one submatrix $M_{ijk}$ with
$\left( i, j, k \right) \in \mathcal{Q} \left( n \right)$.
In other words, the set
$\left\{ M_{ijk} : \left( i, j, k \right) 
\in \mathcal{Q} \left( n \right) \right\}$ is a cover of $M$
with the minimum possible length  
$\left\lvert \mathcal{Q} \left( n \right) \right\rvert$.

\begin{corollary}
  If $n \not\equiv 1 \left(\text{mod } 6\right)$ and 
  $n \not\equiv 3 \left(\text{mod } 6\right)$, then
  \begin{equation*}
    P_{n} \left( K_{n} \leq \varepsilon \right) < 
      \left[ P \left( K \leq \varepsilon \right) \right] 
      ^{\left\lvert \mathcal{Q} \left( n \right)    
        \right\rvert}.
  \end{equation*}
  where 
  \begin{equation*}
    \left\lvert \mathcal{Q} \left( n \right) \right\rvert =
    \left\lceil \frac{n}{3} \left\lceil \frac{n - 1}{2} \right\rceil \right\rceil
  \end{equation*}
  with symbol $\left\lceil \cdot \right\rceil$ denoting the
  ceil function.
\end{corollary}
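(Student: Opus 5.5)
The plan has two independent parts: the combinatorial value of $\left\lvert \mathcal{Q} \left( n \right) \right\rvert$, and the probabilistic inequality. For both, I take $K_{n}$ to be the refined indicator with the maximum running over the cover $\mathcal{Q} \left( n \right)$, so that
\begin{equation*}
  \left\{ K_{n} \leq \varepsilon \right\} = \bigcap_{\left( i, j, k \right) \in \mathcal{Q} \left( n \right)} A_{ijk}, \quad A_{ijk} = \left\{ K \left( M_{ijk} \right) \leq \varepsilon \right\}.
\end{equation*}
Throughout I write $p = P \left( K \leq \varepsilon \right)$. Since each $M_{ijk}$ is a function of three of the product coordinates, $P_{n} \left( A_{ijk} \right) = p$ for every triad.

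For the size of the cover, note that $\mathcal{Q} \left( n \right)$ is by definition a minimum family of triples covering every pair of $\mathcal{I} \left( n \right)$, i.e.\ a minimum Steiner covering. The value $\left\lceil \frac{n}{3} \left\lceil \frac{n-1}{2} \right\rceil \right\rceil$ is the Schönheim bound. The lower bound comes from double counting: each point $r$ lies in $\binom{n-1}{1}$ pairs, and each triad through $r$ covers only two of them, so $r$ lies in at least $\left\lceil \frac{n-1}{2} \right\rceil$ triads. Summing over $r$ and dividing by $3$ gives the bound. Fort and Hedlund \citep{Fort1s} showed that this bound is attained for every $n \geq 3$, so I cite them for the equality.

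For the inequality, the key structural fact is the following. When $n \not\equiv 1, 3 \pmod 6$, no Steiner triple system exists \citep{Kirk1s}. Hence $\mathcal{Q} \left( n \right)$ is not a partition of the pairs, and some pair $\left( r, s \right)$, i.e.\ some variable $m_{rs}$, lies in two triads $t_{1}, t_{2} \in \mathcal{Q} \left( n \right)$. This is exactly why Theorem \ref{THR_PROBABILITYWITHSn} cannot be applied with equality.

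I would order the triads of $\mathcal{Q} \left( n \right)$ and write the chain rule
\begin{equation*}
  P_{n} \left( \bigcap_{t} A_{t} \right) = \prod_{l} P_{n} \left( A_{t_{l}} \mid A_{t_{1}} \cap \cdots \cap A_{t_{l-1}} \right).
\end{equation*}
The aim is to show that every factor is at most $p$ and at least one is strictly smaller. Factors whose triad shares no variable with earlier triads equal $p$ by Fubini, as in the proof of Theorem \ref{THR_PROBABILITYWITHSn}. For a triad sharing variables with earlier ones, I would condition on the shared coordinates. By Fubini, the factor is then an average of the function
\begin{equation*}
  \phi \left( m_{rs} \right) = P \left( A_{t} \mid m_{rs} \right)
\end{equation*}
against a reweighted law of $m_{rs}$, and I would compare that average with the unweighted average $p$. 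The strict inequality $<$ would come from the overlap, which forces this reweighting to be non-trivial, together with the strict comparison at that step.

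The main obstacle is precisely this comparison step. For the explicit distributions of Sections \ref{SEC_UNIFORM3D}--\ref{SEC_NORMAL3D}, the functions $\phi$ are monotone pieces of $\min \left\{ 1, xz/\varepsilon \right\} - \varepsilon xz$. The reweighting of $m_{rs}$ by $\phi$ itself tends to increase rather than decrease the conditional probability, by a Chebyshev or FKG-type correlation argument. So I cannot expect a generic argument to deliver a factor at most $p$. I would first test this step on the uniform case with $n = 4$, where $\left\lvert \mathcal{Q} \left( 4 \right) \right\rvert = 3$. There I would compute the two-triad overlap probability directly from the integral of Theorem \ref{THR_PKUNITUNIFORM} and compare it with $p^{2}$. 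If the comparison fails, the inequality as stated cannot be proved in this generality. It would need an additional hypothesis on $F$ guaranteeing negative dependence of the events $A_{t}$ across overlapping triads, and the proof would invoke that hypothesis at this step.
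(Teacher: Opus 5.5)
You have the count right. Your handling of $\lvert\mathcal{Q}(n)\rvert$ is fine, and it is in fact all the paper's proof does: it treats $\mathcal{Q}(n)$ as a minimum covering of pairs by triples and quotes Sch\"onheim's formula. Your double counting plus Fort and Hedlund for attainment is the more accurate attribution, since Sch\"onheim's expression is in general only a lower bound.

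The paper gives no argument for the probabilistic inequality, and the step you flagged cannot be repaired, because the inequality is false. Take the paper's own cover $\mathcal{Q}(5)=\{(1,2,3),(1,2,4),(1,2,5),(3,4,5)\}$ and read $K_5$ as the maximum over it, as you did:
\begin{itemize}
\item In every triad $(1,2,k)$ the shared entry $m_{12}$ occupies the position $x$.
\item The entry pairs $(m_{1k},m_{2k})$, $k=3,4,5$, are disjoint from each other and independent of $m_{12}$.
\item The triad $(3,4,5)$ shares no entry with the others.
\end{itemize}
Let $\tau=1-\varepsilon$ and $\phi(x)=P\left(\tau\le Y/(xZ)\le 1/\tau\right)$ for independent $Y,Z\sim F$. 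Fubini and Jensen give
\begin{equation*}
  P_{5}\left(K_{5}\le\varepsilon\right)=p\int\phi(x)^{3}\,dF(x)\ \ge\ p\left(\int\phi\,dF\right)^{3}=p^{4}=p^{\lvert\mathcal{Q}(5)\rvert}.
\end{equation*}
The inequality is strict because $\phi$ is not constant: in the uniform case $\phi(x)=\frac{x}{2}\left(\frac{1}{\tau}-\tau\right)$ on $(0,\tau]$, and in the exponential case $\phi(x)=\frac{1}{1+\tau x}-\frac{1}{1+x/\tau}$. So this holds for every $\varepsilon\in(0,1)$. In your chain rule this is the factor $P_{5}(A_{124}\mid A_{123})=\int\phi^{2}\,dF\big/\int\phi\,dF>p$. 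The overlap raises the factor above $p$, exactly as you feared.

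Your $n=4$ test also goes against the statement. Put $L_{ijk}=\ln m_{ik}-\ln m_{ij}-\ln m_{jk}$, so that $K(M_{ijk})\le\varepsilon$ iff $\lvert L_{ijk}\rvert\le\kappa=-\ln(1-\varepsilon)$. For the uniform law, $\ln m_{rs}$ has density $e^{u}$ on $(-\infty,0]$. Then:
\begin{itemize}
\item Each $L_{ijk}$ has density $1/4$ at $0$, which matches $p\sim\varepsilon/2$ from Theorem~\ref{THR_PKUNITUNIFORM}.
\item $(L_{123},L_{124},L_{134})$ has joint density $1/36$ at the origin.
\item Hence $P_{4}(K_{4}\le\varepsilon)/p^{3}\to 64/36=16/9>1$ as $\varepsilon\to 0$.
\end{itemize}
Even if $K_4$ is taken over all of $\mathcal{T}(4)$, the extra defect $L_{234}=L_{123}+L_{134}-L_{124}$ only multiplies this limit by $2/3$, giving $32/27>1$.

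What does hold is an upper bound of the stated type with a packing number in place of the covering number. Let $\mathcal{P}$ be any family of pairwise edge-disjoint triads among those defining $K_n$. Then $\{K_n\le\varepsilon\}\subseteq\bigcap_{t\in\mathcal{P}}A_t$, so $P_n(K_n\le\varepsilon)\le p^{\lvert\mathcal{P}\rvert}$ by the independence used in Theorem~\ref{THR_PROBABILITYWITHSn}.
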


\begin{proof}
  The problem of finding $\mathcal{Q} \left( n \right)$
  may be treated as a minimum covering with triangles. 
  Since in \citep{Scho3s} the number 
  $\psi \left( 3, 2, 1, n \right)$ of triangles in this minimum
  covering has been established in an even more general setting
  \begin{equation*}
    \psi \left( k, l, \lambda, n \right) = 
      \left\lceil \frac{n}{k} \left\lceil 
        \frac{n - 1}{k - 1} \left\lceil \cdots \left\lceil
        \frac{n - l + 2}{k - l + 2} \left\lceil
        \frac{\lambda \left( n - l + 1 \right)}{k - l + 1}
        \right\rceil \right\rceil \cdots \right\rceil
      \right\rceil \right\rceil,
  \end{equation*}
  we may finish the proof.
\end{proof}

A more specific formula for
$\left\lvert \mathcal{Q} \left( n \right) \right\rvert$ is known, which was proved earlier in \citep{Fort1s}.
\begin{equation*}
    \left\lvert \mathcal{Q} \left( n \right) \right\rvert = \left\{ 
    \begin{array}{ll}
      n^{2} / 6, & n \equiv 0 \left(\text{mod } 6\right), \\
      \left( n^{2} + 2 \right) / 6, 
        & n \equiv 2 \text{ or } n \equiv 4 
        \left(\text{mod } 6\right), \\
      \left( n^{2} - n + 4 \right) / 6,
        & n \equiv 5 \left(\text{mod } 6\right).
    \end{array}
  \right.
\end{equation*}
For example, we have
\begin{equation*}
  \left\lvert \mathcal{Q} \left( 4 \right) \right\rvert 
    = 3,\quad
  \left\lvert \mathcal{Q} \left( 5 \right) \right\rvert 
    = 4,\quad
  \left\lvert \mathcal{Q} \left( 6 \right) \right\rvert
    = 6,\quad \text{and}\quad
  \left\lvert \mathcal{Q} \left( 8 \right) \right\rvert
    = 11.
\end{equation*}
The procedures for constructing minimum covering with
triangles can be found in \citep{Lind1}. For example, for
$n = 4, 5, 6$ and $8$, they can be used to produce the following sets of triads (see also Table \ref{TAB_QFOR8}):
\begin{align*}
  \mathcal{Q} \left( 4 \right) & = \left\{ 
    \left( 1, 2, 3 \right), \left( 1, 2, 4 \right),
    \left( 1, 3, 4 \right) \right\},\\  
  \mathcal{Q} \left( 5 \right) & = \left\{ 
    \left( 1, 2, 3 \right), \left( 1, 2, 4 \right),
    \left( 1, 2, 5 \right), \left( 3, 4, 5 \right) \right\},\\
  \mathcal{Q} \left( 6 \right) & = \left\{ 
    \left( 1, 2, 3 \right), \left( 1, 2, 4 \right),
    \left( 1, 5, 6 \right), \left( 2, 5, 6 \right),
    \left( 3, 4, 5 \right), \left( 3, 4, 6 \right)
    \right\},\\    
  \mathcal{Q} \left( 8 \right) & = \left\{ 
    \left( 1, 2, 3 \right), \left( 1, 2, 7 \right),
    \left( 1, 3, 8 \right), \left( 1, 4, 5 \right),
    \left( 1, 4, 6 \right), \left( 2, 4, 8 \right) \right.,\\
    & \quad\ 
    \left. \left( 2, 5, 6 \right), \left( 3, 4, 7 \right),
    \left( 3, 5, 6 \right), \left( 5, 7, 8 \right),
    \left( 6, 7, 8 \right) \right\}.
\end{align*}

\begin{table*}[ht] 
  \caption{The minimal triangle cover 
  $\mathcal{Q} \left( n \right)$ of 
  $\mathcal{T} \left( n \right)$ for $n = 8$.} \label{TAB_QFOR8}
  \begin{equation*}    
    \begin{pmatrix}
      1 & \fcolorbox{black}{lightblue}{$1 \mid 2$} & \fcolorbox{black}{lightblue}{$1 \mid 3$} & \fcolorbox{black}{lightblue}{$4 \mid 5$} & \fcolorbox{black}{lightblue}{$4$} 
        & \fcolorbox{black}{lightblue}{$5$} & \fcolorbox{black}{lightblue}{$2$} & \fcolorbox{black}{lightblue}{$3$} \\
      ~ & 1 & \fcolorbox{black}{lightblue}{$1$} & \fcolorbox{black}{lightblue}{$6$} & \fcolorbox{black}{lightblue}{$7$} & \fcolorbox{black}{lightblue}{$7$} 
        & \fcolorbox{black}{lightblue}{$2$} & \fcolorbox{black}{lightblue}{$6$} \\
      ~ & ~ & 1 & \fcolorbox{black}{lightblue}{$8$} & \fcolorbox{black}{lightblue}{$9$} & \fcolorbox{black}{lightblue}{$9$} & \fcolorbox{black}{lightblue}{$8$} & \fcolorbox{black}{lightblue}{$3$} \\
      ~ & ~ & ~ & 1 & \fcolorbox{black}{lightblue}{$4$} & \fcolorbox{black}{lightblue}{$5$} & \fcolorbox{black}{lightblue}{$8$} & \fcolorbox{black}{lightblue}{$6$} \\
      ~ & ~ & ~ & ~ & 1 & \fcolorbox{black}{lightblue}{$7 \mid 9$} & \fcolorbox{black}{lightblue}{$10$} & \fcolorbox{black}{lightblue}{$10$} \\
      ~ & ~ & ~ & ~ & ~ & 1 & \fcolorbox{black}{lightblue}{$11$} & \fcolorbox{black}{lightblue}{$11$} \\
      ~ & ~ & ~ & ~ & ~ & ~ & 1 & \fcolorbox{black}{lightblue}{$10 \mid 11$} \\
      ~ & ~ & ~ & ~ & ~ & ~ & ~ & 1 
    \end{pmatrix}
  \end{equation*}   
\end{table*}

\section{Conclusions}
The presented formulae for probabilities 
$P_{n} \left( K_{n} \leq \varepsilon \right)$ allow one to estimate the frequency of the occurrence of random matrices within the assumed level of inconsistency relative to all admissible PC matrices. They suggest a novel way to compute a threshold value of the inconsistency indicator 
$K_{n} \left( M \right)$, below which one should accept the PC
matrices assessed by experts. More precisely, if we suppose that $m$ experts made their judgements in the form of PC matrices $M ^{ \left( k \right) }$, $k = 1, \ldots, m$, then every value of $K_{n} \left( M ^{\left( k \right)} \right) $ must be compared with the assumed inconsistency tolerance threshold $\varepsilon \in \left( 0,1 \right) $. Only those PC matrices for which $K_{n} \left( M ^{\left( k \right)} \right) \leq \varepsilon $ are allowed to the further stages of the decision making process. If we choose some $\varepsilon$ as the threshold of acceptable inconsistency of experts' assessments, then we admit that in this group of experts there may be 
\begin{equation*}
  r \leq m\, P_{n} \left( K_{n} \leq \varepsilon \right) \ \text{persons},
\end{equation*}
who could provide random PC matrices instead of meaningful ones. Thus, it is obvious that a group decision making system should adjust the level $\varepsilon$ of acceptable inconsistency of PC matrices to the largest possible value for which $r < 1$. For this purpose, the following divide-and-conquer Algorithm \ref{ALG_ADJUSTINGTHRESHOLD} can be used.

\begin{algorithm}
  \caption{\small Adjusting the threshold of PC matrix inconsistency, according to the number of experts.} \label{ALG_ADJUSTINGTHRESHOLD}
  \begin{algorithmic}[1]
  {\small
    \REQUIRE an initial level of inconsistency $\varepsilon _{0} \in \left( 0,1\right) $, a number of experts $m \in \mathbb{N}$, and required precision of computation $prec > 0$.
    \ENSURE the altered level of inconsitency $\varepsilon$.
    \IF{$\lfloor m P_{n} \left( K_{n} \leq \varepsilon_{0} \right) \rfloor < 1$}
      \RETURN $\varepsilon_{0}$
    \ENDIF
    \STATE $l \leftarrow 0$
    \STATE $r \leftarrow 1$
    \STATE $\varepsilon \leftarrow \varepsilon_{0}$
    \WHILE{$r - l > prec$}
      \IF{$\lfloor m P_{n} \left( K_{n} \leq \varepsilon \right) \rfloor \geq 1$}
        \STATE $r \leftarrow \varepsilon$
      \ELSE
	\STATE $l \leftarrow \varepsilon$
      \ENDIF
      \STATE $\varepsilon \leftarrow \left( l + r \right) / 2$
    \ENDWHILE \\
    \RETURN  $l$
  }
  \end{algorithmic}
\end{algorithm}

Finally, since any cumulative probability distribution function is non-decreasing in each direction, the correctness of Algorithm \ref{ALG_ADJUSTINGTHRESHOLD} is evident. Moreover, note that the applicability of this algorithm is indirectly determined by the degree $n$ of PC matrices. The larger $n$, the more experts are required for this threshold adjustment to work. Fortunately, in view of psychological limitations of human insight, in practice the number $n$ of alternatives included in pairwise comparisons is usually assumed to be at most $7$. If, however, a greater number of alternatives have to be considered, then they may be divided into smaller groups before pairwise comparisons are made. 

\bibliographystyle{model5-names}
\biboptions{authoryear}
\bibliography{IiAHPUDU}

\begin{thebibliography}{32}
\expandafter\ifx\csname natexlab\endcsname\relax\def\natexlab#1{#1}\fi
\providecommand{\url}[1]{\texttt{#1}}
\providecommand{\href}[2]{#2}
\providecommand{\path}[1]{#1}
\providecommand{\DOIprefix}{doi:}
\providecommand{\ArXivprefix}{arXiv:}
\providecommand{\URLprefix}{URL: }
\providecommand{\Pubmedprefix}{pmid:}
\providecommand{\doi}[1]{\href{http://dx.doi.org/#1}{\path{#1}}}
\providecommand{\Pubmed}[1]{\href{pmid:#1}{\path{#1}}}
\providecommand{\bibinfo}[2]{#2}
\ifx\xfnm\relax \def\xfnm[#1]{\unskip,\space#1}\fi
\bibitem[{Aguar\'on \& Moreno-Jim\'enez(2003)}]{Agua1s}
\bibinfo{author}{Aguar\'on, J.}, \& \bibinfo{author}{Moreno-Jim\'enez, J.~M.}
  (\bibinfo{year}{2003}).
\newblock \bibinfo{title}{The geometric consistency index: {Approximated}
  thresholds}.
\newblock {\it \bibinfo{journal}{European J. Oper. Res.}\/},  {\it
  \bibinfo{volume}{147}\/}, \bibinfo{pages}{137--145}.
\bibitem[{Bose(1939)}]{Bose1s}
\bibinfo{author}{Bose, R.~C.} (\bibinfo{year}{1939}).
\newblock \bibinfo{title}{On the construction of balanced incomplete block
  designs}.
\newblock {\it \bibinfo{journal}{Ann. Eugenics}\/},  {\it
  \bibinfo{volume}{9}\/}, \bibinfo{pages}{353–399}.
\bibitem[{Brunelli(2016)}]{Brun1s}
\bibinfo{author}{Brunelli, M.} (\bibinfo{year}{2016}).
\newblock \bibinfo{title}{Recent advances on inconsistency indices for pairwise
  comparisons -- a commentary}.
\newblock {\it \bibinfo{journal}{Fundam. Inform.}\/},  {\it
  \bibinfo{volume}{144}\/}, \bibinfo{pages}{321--332}.
\bibitem[{Brunelli(2018)}]{Brun3s}
\bibinfo{author}{Brunelli, M.} (\bibinfo{year}{2018}).
\newblock \bibinfo{title}{A survey of inconsistency indices for pairwise
  comparisons}.
\newblock {\it \bibinfo{journal}{Int. J. Gen. Syst.}\/},  {\it
  \bibinfo{volume}{47}\/}, \bibinfo{pages}{751--771}.
\bibitem[{Brunelli et~al.(2013)Brunelli, Canal \& Fedrizzi}]{Brun4s}
\bibinfo{author}{Brunelli, M.}, \bibinfo{author}{Canal, L.}, \&
  \bibinfo{author}{Fedrizzi, M.} (\bibinfo{year}{2013}).
\newblock \bibinfo{title}{Inconsistency indices for pairwise comparison
  matrices: a numerical study}.
\newblock {\it \bibinfo{journal}{Ann. Oper. Res.}\/},  {\it
  \bibinfo{volume}{211}\/}, \bibinfo{pages}{493--509}.
\bibitem[{Duszak \& Koczkodaj(1994)}]{Dusz1s}
\bibinfo{author}{Duszak, Z.}, \& \bibinfo{author}{Koczkodaj, W.~W.}
  (\bibinfo{year}{1994}).
\newblock \bibinfo{title}{Generalization of a new definition of consistency for
  pairwise comparisons}.
\newblock {\it \bibinfo{journal}{Inf. Process. Lett.}\/},  {\it
  \bibinfo{volume}{52}\/}, \bibinfo{pages}{273--276}.
\bibitem[{{Fort Jr.} \& Hedlund(1958)}]{Fort1s}
\bibinfo{author}{{Fort Jr.}, M.~K.}, \& \bibinfo{author}{Hedlund, G.~A.}
  (\bibinfo{year}{1958}).
\newblock \bibinfo{title}{Minimal coverings of pairs by triples}.
\newblock {\it \bibinfo{journal}{Pacific J. Math.}\/},  {\it
  \bibinfo{volume}{8}\/}, \bibinfo{pages}{709--719}.
\bibitem[{Gautschi(2004)}]{Gaut1}
\bibinfo{author}{Gautschi, W.} (\bibinfo{year}{2004}).
\newblock {\it \bibinfo{title}{{Orthogonal Polynomials Computation and
  Approximation}}\/}.
\newblock Numerical Mathematics and Scientific Computation.
\newblock \bibinfo{address}{New York}: \bibinfo{publisher}{Oxford University
  Press}.
\bibitem[{Glaser et~al.(2007)Glaser, Liu \& Rokhlin}]{Glas1s}
\bibinfo{author}{Glaser, A.}, \bibinfo{author}{Liu, X.}, \&
  \bibinfo{author}{Rokhlin, V.} (\bibinfo{year}{2007}).
\newblock \bibinfo{title}{A fast algorithm for the calculation of the roots of
  special functions}.
\newblock {\it \bibinfo{journal}{SIAM J. Sci. Comput.}\/},  {\it
  \bibinfo{volume}{29}\/}, \bibinfo{pages}{1420--1438}.
\bibitem[{Gradshteyn \& Ryzhik(1994)}]{Grad1}
\bibinfo{author}{Gradshteyn, I.~S.}, \& \bibinfo{author}{Ryzhik, I.~M.}
  (\bibinfo{year}{1994}).
\newblock {\it \bibinfo{title}{{Table of Integrals, Series, and Products}}\/}.
\newblock (\bibinfo{edition}{5th} ed.).
\newblock \bibinfo{address}{San Diego}: \bibinfo{publisher}{Academic Press}.
\bibitem[{Gu \& Eisenstat(1995)}]{Gu1s}
\bibinfo{author}{Gu, M.}, \& \bibinfo{author}{Eisenstat, S.~C.}
  (\bibinfo{year}{1995}).
\newblock \bibinfo{title}{A divide-and-conquer algorithm for the symmetric
  tridiagonal eigenproblem}.
\newblock {\it \bibinfo{journal}{SIAM J. Matrix Anal. Appl.}\/},  {\it
  \bibinfo{volume}{16}\/}, \bibinfo{pages}{172--191}.
\bibitem[{Herman \& Koczkodaj(1996)}]{Herm1s}
\bibinfo{author}{Herman, M.~W.}, \& \bibinfo{author}{Koczkodaj, W.~W.}
  (\bibinfo{year}{1996}).
\newblock \bibinfo{title}{A {Monte Carlo} study of pairwise comparison}.
\newblock {\it \bibinfo{journal}{Inform. Process. Lett.}\/},  {\it
  \bibinfo{volume}{57}\/}, \bibinfo{pages}{25--29}.
\bibitem[{Kendall \& Babington-Smith(1940)}]{Kend1}
\bibinfo{author}{Kendall, M.~G.}, \& \bibinfo{author}{Babington-Smith, B.}
  (\bibinfo{year}{1940}).
\newblock \bibinfo{title}{On the method of paired comparisons}.
\newblock {\it \bibinfo{journal}{Biometrika}\/},  {\it \bibinfo{volume}{31}\/},
  \bibinfo{pages}{324--345}.
\bibitem[{Kirkman(1847)}]{Kirk1s}
\bibinfo{author}{Kirkman, T.~P.} (\bibinfo{year}{1847}).
\newblock \bibinfo{title}{On a problem in combinations}.
\newblock {\it \bibinfo{journal}{Cambridge and Dublin Math. Journal}\/},  {\it
  \bibinfo{volume}{2}\/}, \bibinfo{pages}{191–204}.
\bibitem[{Koczkodaj(1993)}]{Kocz4s}
\bibinfo{author}{Koczkodaj, W.~W.} (\bibinfo{year}{1993}).
\newblock \bibinfo{title}{A new definition of consistency of pairwise
  comparisons}.
\newblock {\it \bibinfo{journal}{Math. Comput. Model.}\/},  {\it
  \bibinfo{volume}{18}\/}, \bibinfo{pages}{79--84}.
\bibitem[{Koczkodaj et~al.(2020)Koczkodaj, Smarzewski \& Szybowski}]{Kocz2s}
\bibinfo{author}{Koczkodaj, W.~W.}, \bibinfo{author}{Smarzewski, R.}, \&
  \bibinfo{author}{Szybowski, J.} (\bibinfo{year}{2020}).
\newblock \bibinfo{title}{On orthogonal projections on the space of consistent
  pairwise comparisons matrices}.
\newblock {\it \bibinfo{journal}{Fundam. Inform.}\/},  {\it
  \bibinfo{volume}{172}\/}, \bibinfo{pages}{379--397}.
\bibitem[{Koczkodaj \& Szwarc(2014)}]{Kocz5s}
\bibinfo{author}{Koczkodaj, W.~W.}, \& \bibinfo{author}{Szwarc, R.}
  (\bibinfo{year}{2014}).
\newblock \bibinfo{title}{On axiomatization of inconsistency indicators for
  pairwise comparisons}.
\newblock {\it \bibinfo{journal}{Fundam. Inform.}\/},  {\it
  \bibinfo{volume}{132}\/}, \bibinfo{pages}{485--500}.
\bibitem[{Kułakowski et~al.(2019)Kułakowski, Szybowski \& Prusak}]{Kula1s}
\bibinfo{author}{Kułakowski, K.}, \bibinfo{author}{Szybowski, J.}, \&
  \bibinfo{author}{Prusak, A.} (\bibinfo{year}{2019}).
\newblock \bibinfo{title}{Towards quantification of incompleteness in the
  pairwise comparisons methods}.
\newblock {\it \bibinfo{journal}{Internat. J. Approx. Reason.}\/},  {\it
  \bibinfo{volume}{115}\/}, \bibinfo{pages}{221--234}.
\bibitem[{Lindner \& Rodger(2008)}]{Lind1}
\bibinfo{author}{Lindner, C.~C.}, \& \bibinfo{author}{Rodger, C.~A.}
  (\bibinfo{year}{2008}).
\newblock {\it \bibinfo{title}{{Design Theory}}\/}.
\newblock Discrete Mathematics and its Applications.
\newblock \bibinfo{address}{Boca Raton, FL}: \bibinfo{publisher}{CRC Press,
  Taylor \& Francis Group}.
\bibitem[{Liu et~al.(2020)Liu, Zou \& Li}]{Liu1s}
\bibinfo{author}{Liu, F.}, \bibinfo{author}{Zou, S.-C.}, \&
  \bibinfo{author}{Li, Q.} (\bibinfo{year}{2020}).
\newblock \bibinfo{title}{Deriving priorities from pairwise comparison matrices
  with a novel consistency index}.
\newblock {\it \bibinfo{journal}{Appl. Math. Comput.}\/},  {\it
  \bibinfo{volume}{374}\/}, \bibinfo{pages}{125059}.
\bibitem[{Magnot(2019)}]{Magn1}
\bibinfo{author}{Magnot, J.-P.} (\bibinfo{year}{2019}).
\newblock \bibinfo{title}{On mathematical structures on pairwise comparisons
  matrices with coefficients in an abstract group arising from quantum
  gravity}.
\newblock {\it \bibinfo{journal}{Heliyon}\/},  {\it \bibinfo{volume}{5}\/},
  \bibinfo{pages}{e01821}.
\bibitem[{Magnot(2020)}]{Magn2}
\bibinfo{author}{Magnot, J.-P.} (\bibinfo{year}{2020}).
\newblock \bibinfo{title}{Inconsistency in pairwise comparisons as an {Abelian}
  {Yang-Mills} theory}.
\newblock \bibinfo{howpublished}{hal-03091527
  (https://hal.archives-ouvertes.fr/hal-03091527/document)}.
\bibitem[{Mazurek(2023)}]{Mazu1}
\bibinfo{author}{Mazurek, J.} (\bibinfo{year}{2023}).
\newblock \bibinfo{title}{{Inconsistency Indices and Their Properties}}.
\newblock In {\it \bibinfo{booktitle}{Advances in Pairwise Comparisons:
  Detection, Evaluation and Reduction of Inconsistency}\/} (pp.
  \bibinfo{pages}{21--47}).
\newblock \bibinfo{address}{Cham}: \bibinfo{publisher}{Springer Nature
  Switzerland}.
\bibitem[{Nikiforov \& Uvarov(1988)}]{Niki1}
\bibinfo{author}{Nikiforov, A.~F.}, \& \bibinfo{author}{Uvarov, V.~B.}
  (\bibinfo{year}{1988}).
\newblock {\it \bibinfo{title}{Special Functions of Mathematical Physics}\/}.
\newblock \bibinfo{address}{Basel-Boston}:
  \bibinfo{publisher}{{Birk\-h\"{a}user}}.
\bibitem[{Pel\'aez \& Lamata(2003)}]{Pela1}
\bibinfo{author}{Pel\'aez, J.~I.}, \& \bibinfo{author}{Lamata, M.~T.}
  (\bibinfo{year}{2003}).
\newblock \bibinfo{title}{A new measure of consistency for positive reciprocal
  matrices}.
\newblock {\it \bibinfo{journal}{Computers and Mathematics with
  Applications}\/},  {\it \bibinfo{volume}{46}\/}, \bibinfo{pages}{1839--1845}.
\bibitem[{Saaty(1977)}]{Saat1s}
\bibinfo{author}{Saaty, T.~L.} (\bibinfo{year}{1977}).
\newblock \bibinfo{title}{A scaling method for priorities in hierarchical
  structure}.
\newblock {\it \bibinfo{journal}{J. Math. Psych.}\/},  {\it
  \bibinfo{volume}{15}\/}, \bibinfo{pages}{234--281}.
\bibitem[{Sch{\"o}nheim(1964)}]{Scho3s}
\bibinfo{author}{Sch{\"o}nheim, J.} (\bibinfo{year}{1964}).
\newblock \bibinfo{title}{On coverings}.
\newblock {\it \bibinfo{journal}{Pacific J. Math.}\/},  {\it
  \bibinfo{volume}{14}\/}, \bibinfo{pages}{1405–1411}.
\bibitem[{Skolem(1958)}]{Skol1s}
\bibinfo{author}{Skolem, T.} (\bibinfo{year}{1958}).
\newblock \bibinfo{title}{Some remarks on the triple systems of steiner}.
\newblock {\it \bibinfo{journal}{Math. Scand.}\/},  {\it
  \bibinfo{volume}{6}\/}, \bibinfo{pages}{273–280}.
\bibitem[{Smarzewski \& Rutka(2020)}]{Smar3s}
\bibinfo{author}{Smarzewski, R.}, \& \bibinfo{author}{Rutka, P.}
  (\bibinfo{year}{2020}).
\newblock \bibinfo{title}{Consistent projections and indicators in pairwise
  comparisons}.
\newblock {\it \bibinfo{journal}{Internat. J. Approx. Reason.}\/},  {\it
  \bibinfo{volume}{124}\/}, \bibinfo{pages}{123--132}.
\bibitem[{Stein \& Mizzi(2007)}]{Stei1s}
\bibinfo{author}{Stein, W.~E.}, \& \bibinfo{author}{Mizzi, P.~J.}
  (\bibinfo{year}{2007}).
\newblock \bibinfo{title}{The harmonic consistency index for the analytic
  hierarchy process}.
\newblock {\it \bibinfo{journal}{European J. Oper. Res.}\/},  {\it
  \bibinfo{volume}{177}\/}, \bibinfo{pages}{488--497}.
\bibitem[{Szybowski et~al.(2020)Szybowski, Ku\l{}akowski \& Prusak}]{Szyb1s}
\bibinfo{author}{Szybowski, J.}, \bibinfo{author}{Ku\l{}akowski, K.}, \&
  \bibinfo{author}{Prusak, A.} (\bibinfo{year}{2020}).
\newblock \bibinfo{title}{New inconsistency indicators for incomplete pairwise
  comparisons matrices}.
\newblock {\it \bibinfo{journal}{Math. Social Sci.}\/},  {\it
  \bibinfo{volume}{108}\/}, \bibinfo{pages}{138--145}.
\bibitem[{{Van Laarhoven} \& Pedrycz(1983)}]{Laar1s}
\bibinfo{author}{{Van Laarhoven}, P.~J.~M.}, \& \bibinfo{author}{Pedrycz, W.}
  (\bibinfo{year}{1983}).
\newblock \bibinfo{title}{A fuzzy extension of {Saaty's} priority theory}.
\newblock {\it \bibinfo{journal}{Fuzzy Sets and Systems}\/},  {\it
  \bibinfo{volume}{11}\/}, \bibinfo{pages}{229--241}.

\end{thebibliography}

\end{document}